\numberwithin{equation}{section}
\theoremstyle{plain}
\newtheorem{theorem}{Theorem}[section]
\newtheorem{corollary}{Corollary}[section]
\newtheorem{definition}{Definition}[section]
\newtheorem{lemma}{Lemma}[section]
\newcommand{\irm}{\ensuremath \mathrm{i}}
\newcommand{\sech}{\ensuremath \operatorname{sech}}
\newcommand{\diagdots}[3][-10]{%
  \rotatebox{#1}{\makebox[0pt]{\makebox[#2]{\xleaders\hbox{$\cdot$\hskip#3}\hfill\kern0pt}}}%
}
\newcommand{\diagdotss}[3][-14]{%
  \rotatebox{#1}{\makebox[0pt]{\makebox[#2]{\xleaders\hbox{$\cdot$\hskip#3}\hfill\kern0pt}}}%
}
\begin{document}

\begin{frontmatter}
\title{Density Tracking by Quadrature for Stochastic Differential Equations}
\runtitle{Density Tracking by Quadrature for SDE}
\begin{aug}
\author{\fnms{Harish S.} \snm{Bhat}\thanksref{m1,t2}\ead[label=e1]{hbhat@ucmerced.edu}} \and
\author{\fnms{R. W. M. A.} \snm{Madushani}\thanksref{m2}\ead[label=e2]{Anusha.WasalaMudiyanselage@medicine.ufl.edu}}
\thankstext{t2}{Partially supported by the National Science Foundation award DMS-1723272.}
\runauthor{Bhat and Madushani}

\affiliation{University of California, Merced\thanksmark{m1} and University of Florida\thanksmark{m2}}

\address{Harish S. Bhat\\
School of Natural Sciences\\
5200 N. Lake Rd.\\
Merced, CA 95343 USA\\
\printead{e1}}

\address{R. W. M. A. Madushani\\
Department of Medicine\\
University of Florida\\
Gainesville, FL 32610 USA\\
\printead{e2}}
\end{aug}

\begin{abstract}
We develop and analyze a method, density tracking by quadrature
(DTQ), to compute the probability density function of the solution of
a stochastic differential equation.  The
derivation of the method begins with the discretization in time of the
stochastic differential equation, resulting in a discrete-time Markov
chain with continuous state space. At each time step, DTQ
applies quadrature to solve the Chapman-Kolmogorov equation for this
Markov chain.  In this paper, we focus on a particular case of the DTQ
method that arises from applying the Euler-Maruyama method in time and
the trapezoidal quadrature rule in space.  Our main result establishes
that the density computed by DTQ converges in $L^1$ to both 
the exact density of the
Markov chain (with exponential convergence rate), and to the exact
density of the stochastic differential equation (with first-order 
convergence rate).   We establish a Chernoff bound that implies
convergence of a domain-truncated version of DTQ.   We
carry out numerical tests to show that the empirical performance of
DTQ matches theoretical results, and also to
demonstrate that DTQ can compute
densities several times faster than a Fokker-Planck solver,
for the same level of error.
\end{abstract}

\begin{keyword}[class=MSC]
\kwd[Primary ]{65C30}
\kwd{60H35}
\kwd[; secondary ]{65C40}
\end{keyword}

\begin{keyword}
\kwd{stochastic differential equations}
\kwd{probability density function}
\kwd{algorithm}
\kwd{quadrature}
\end{keyword}

\end{frontmatter}

\section{Introduction}
\label{sect:intro}
Suppose that $(\Omega, \mathcal{F}, \{\mathcal{F}_t\}_{t \geq 0}, \mathbb{P})$ is a complete probability space such that the filtration $\{\mathcal{F}_t\}_{t \geq 0}$ satisfies the usual conditions.  Let $W_t$ denote the Wiener process defined on the probability space.  Consider the scalar stochastic differential equation (SDE)
\begin{equation}
\label{sde}
dX_t = f(X_t)dt + g(X_t)dW_t.
\end{equation}
For simplicity, we assume a deterministic initial condition $X_0 = C$.  Note that $X_t$ is an It\^{o} diffusion; neither the drift $f$ nor the diffusion $g$ feature explicit time-dependence.  Assuming regularity of $f$ and $g$, $X_t$ has a probability density function $p(x,t)$ \citep{Rogers1985}.  In this paper, we develop a convergent numerical method to solve for $p$.  We call our method density tracking by quadrature (DTQ).

To introduce DTQ informally, let us describe the three main steps in its derivation:
\begin{enumerate}
\item Discretize the SDE (\ref{sde}) in time.
\item Interpret the time-discretized equation as a discrete-time
  Markov chain; let $\tilde{p}$ denote its density.  Write the
  Chapman-Kolmogorov equation for the time-evolution of $\tilde{p}$.
\item Discretize both the Chapman-Kolmogorov equation and $\tilde{p}$ in space, e.g., using a spatial grid and numerical quadrature.  Let $\hat{p}$ denote the discrete-space approximation of $\tilde{p}$.
\end{enumerate}
We use in step 1 the explicit Euler-Maruyama method and the trapezoidal rule in step 3; unless stated otherwise, this is the DTQ method analyzed in this paper.  Please note that the above steps give a blueprint for many possible algorithms; it is entirely possible that by choosing a different time integrator and a different quadrature rule, one could derive a DTQ method that improves upon the default method studied here.

In this paper, we prove that $\hat{p}$ converges to
$p$ as the discretization parameters tend to zero.  Because there are
existing results on the convergence of $\tilde{p}$ to $p$, the main
task of this paper is to show that $\hat{p} \to \tilde{p}$.

\citet{BallyTalay1996} established conditions under which $\tilde{p}$
converges to $p$, in the case where Euler-Maruyama is used
to discretize (\ref{sde}) in time.  Let $\| f \|_1$ denote the
$L^1$ norm of a function $f$.  Suppose we seek the density of
(\ref{sde}) at time $T > 0$.  Let $h > 0$ denote the temporal step
size; as we take $h \to 0$, we assume $T = N h$ stays fixed.  Then the
results of \citet{BallyTalay1996} imply that $\| p(\cdot,T) -
\tilde{p}(\cdot,T) \|_1 = O(h)$.

Our work builds on this result.  The DTQ method analyzed here combines
Euler-Maruyama temporal discretization with the trapezoidal rule on an
equispaced grid.  This results in a fast, simple method to compute an
approximation $\hat{p}$ such that $\| \tilde{p}(\cdot,T) -
\hat{p}(\cdot,T) \|_1 = O(h^{-1} \exp(-r h^{-\kappa}))$ for positive
constants $r$, $\kappa$.  The user can control $\kappa$
by adjusting the relationship between the spatial and temporal grid
spacings.

The primary application of this work that we envision is in
statistical estimation and inference for diffusion processes.  DTQ can be
used to numerically approximate the likelihood
function for a diffusion that is observed at discrete points in
time \citep{BhatMadu2016, BhatMaduRawat2016}.  The present work lays 
a theoretical foundation for these statistical applications.  
Additionally, note that when
estimation/inference procedures for diffusions have been compared, 
a method that
approximates the likelihood by numerically solving the Fokker-Planck
(or Kolmogorov) equation achieves superior accuracy at the cost of
excessive computational time \citep{Hurn2007}.  The results of the
present paper indicate that DTQ achieves the
same accuracy as a Fokker-Planck solver with less computational
effort, further motivating its use.

We now review alternative approaches to compute the density of (\ref{sde}), including prior work on DTQ and its relatives.

\subsection{Alternative Approaches}
\label{sect:alternativeapproaches}
If the drift $f$ and diffusion $g$ are sufficiently smooth, then $p$
satisfies the forward Kolmogorov (or Fokker-Planck) equation \citep{Rogers1985}:
\begin{equation}
\label{eqn:kolmo}
\frac{\partial}{\partial t}p(x,t) = -\frac{\partial}{\partial x}[f(x,t)p(x,t)]+\frac{1}{2}\frac{\partial^2}{\partial x^2}[g^2(x,t)p(x,t)].
\end{equation}
Prescribing an initial condition $p(x,0)$, we may then solve
(\ref{eqn:kolmo}) to obtain the density $p(x,T)$ at time $T > 0$.  The
solution of (\ref{eqn:kolmo}) must satisfy the normalization
condition $\int_{x \in \mathbb{R}} p(\cdot,t) \, dx = 1$, which
implies boundary conditions of the form $\lim_{|x| \to \infty} p(x,t)
= 0$.

We view DTQ as an alternative to numerical methods for the
solution of (\ref{eqn:kolmo}).  The primary purpose of the present paper is to
demonstrate intrinsic properties---both theoretical and
empirical---of DTQ.  We compare DTQ with a 
finite difference method for the solution of (\ref{eqn:kolmo}); this is a
logical choice given the particular version of DTQ studied here.
In the present version, the density is numerically
approximated (i.e., finite-dimensionalized) by a sequence of values on an
equispaced grid, just as in a finite difference method for a partial 
differential equation (PDE).  By instead choosing to represent the
unknown density as an expansion in a basis or frame, we can derive different
versions of the DTQ method that are akin to finite element, meshless, and 
Hermite spectral methods for (\ref{eqn:kolmo})
\citep{DiPaola2002, Pichler2013, Canor2013, Luo2013}.  We will pursue this
line of reasoning and resulting comparisons in future work.  In the present
work, we compare DTQ against a finite difference Fokker-Planck solver that is
first-order in time and second-order in space. For a particular test
problem at the finest grid resolution we consider, DTQ
computes a solution with $L^1$ error $\approx 3 \times 10^{-3}$ more
than $100$ times faster than the Fokker-Planck solver.


Both numerical Fokker-Planck solvers and DTQ are deterministic approaches that avoid random sampling.  We also place in this category the closed-form approximation methods of A\"it-Sahalia for both univariate and multivariate \citep{AitSahalia2002, AitSahalia2008} diffusions.  As opposed to deterministic approaches, one might try to estimate the density of (\ref{sde}) by sampling.  Specifically, one can employ any convergent numerical method to step (\ref{sde}) forward in time from $t=0$ to $t=T$, thereby generating one sample of $X_T$.  Repeating this procedure many times, one can obtain enough samples of $X_T$ to compute a statistical estimate of the density at time $T$.  For instance, one could compute a histogram or a kernel density estimate.  Several existing methods can be viewed as special cases and/or extensions of this approach \citep{HuWatanabe1996, KohatsuHiga1997, Milstein2004, Giles2015}.  In such methods, the accuracy of the density will be controlled by two parameters: the temporal step size and the number of sample paths.  If there are $N_S$ samples, then a typical stochastic time-stepping method will contribute an error of $N_S^{-1/2}$ and kernel density estimation will contribute an error of, e.g., $N_S^{-4/5}$.  In comparison, the DTQ method's accuracy is also controlled by two parameters, the temporal step size and the grid spacing.  Note that the trapezoidal rule on the real line contributes an error that decays exponentially in the grid spacing \citep{Trefethen2014}.  For this reason, we believe DTQ will be a strong alternative to a sampling-based method.

Returning to the forward Kolmogorov or Fokker-Planck equation (\ref{eqn:kolmo}), we see that smoothness of $f$ and $g$ is required in order to have classical solutions.  The implementation of DTQ itself does not utilize derivatives (whether exact or approximate) of $f$ and $g$.  At the same time, our convergence theory assumes analyticity of $f$ and $g$ on a strip in the complex plane that contains the real line.  We give two reasons for assuming analyticity.  First, many models of scientific interest involve functions $f$ and $g$ that do satisfy these hypotheses.  Second, in order to apply exponential error estimates for the trapezoidal rule \citep{Trefethen2014}, it is essential that our integrand, which depends on $f$ and $g$, be analytic on a strip.  Ultimately, we expect that the hypotheses in the present convergence proof can be relaxed, both by changing the quadrature scheme/estimates and by making use of improved estimates for the convergence of $\tilde{p}$ to $p$ \citep{GobetLabart2008}.  Still, the present results are sufficient for statistical tasks we have in mind.  


\subsection{Prior Work}
\label{sect:priorwork}
DTQ has been described previously as numerical path integration.  The method has achieved accurate results on a variety of problems in, e.g., nonlinear mechanics and finance---see \citet{Wehner1983, Naess1993, Linetsky1997, Yu1997, RosaClot2002, Skaug2007}.  Recently, numerical path integration has been studied using semigroup methods \citep{Chen2017}; though convergence of $\tilde{p}$ to $p$ in $L^1$ is established, a fully discrete scheme (i.e., discretized in both time and space) is not analyzed.  Interestingly, \citet{Chen2017} do not require that the drift $f$ or diffusion $g$ are bounded above, nor do they require more than $4$ continuous derivatives for either function.  These results complement ours, especially as we seek in future work to relax hypotheses and to improve our error estimates for quantities computed in practice, i.e., $\hat{p}$ and its truncated domain version, $\mathring{p}$.

The DTQ method proposed here is an outgrowth of prior work on computing densities for stochastic delay differential equations \citep{BhatKumar2012, Bhat2014, BhatMadushani2015}.  The method from \citet{BhatMadushani2015}, when adapted to equations with no time delay, is the method in the present paper.  Our prior works did not address convergence from a theoretical standpoint, nor did they present empirical results of monotonic convergence that are in strict accordance with theory.  The present paper addresses both of these issues.

When we derive the DTQ method, we make use of the fact that a time-discretization of (\ref{sde}) can be viewed as a discrete-time Markov chain on a continuous state space.  Suppose we were to take a different point of view, that of trying to design a discrete-time Markov chain on a discrete state space whose law or density approximates well that of the original SDE.  In this case, there are extensive results starting from the work of \citet{Kushner1974}.  Like a discrete-time, discrete-time Markov chain, the DTQ algorithm can be written in the form $\hat{p}(t_{n+1}) = A \hat{p}(t_n)$, where $A$ is a matrix (possibly with an infinite number of rows and columns) and $\hat{p}(t_j)$ represents the approximate density at time $t_j$.  However, because of the quadrature-based derivation of the DTQ algorithm, the matrix $A$ is, in general, not a Markov transition matrix.  We find it both mathematically interesting and practically useful that, in spite of this, the DTQ method's $\hat{p}$ converges exponentially to $\tilde{p}$.  

The Chapman-Kolmogorov equation that is at the center of this paper---see (\ref{cdt})---has appeared in \citet{Pedersen1995,SantaClara1997}.  In these works, the right-hand side of the Chapman-Kolmogorov equation is interpreted as an expected value that can be computed using Monte Carlo methods.  In our approach, we use deterministic quadrature to evaluate the right-hand side of the Chapman-Kolmogorov equation.  There is one prior paper we found that features this approach, albeit in a different context, that of a nonlinear autoregressive time series model \citep{Cai2003}. The convergence results in \citet{Cai2003} are of a different nature than ours, because they involve taking the continuum limit in space but \emph{not} in time.  In the present work, we are interested in the error made by the DTQ method as both the temporal and spatial grid spacings vanish.  

\subsection{Summary of Results and Outline}
\label{sect:summaryandoutline}
The main result of this paper is a provably convergent method for
computing an approximation $\hat{p}$ of the density $p$ for the SDE
(\ref{sde}).  Let $h > 0$ and $k > 0$ denote, respectively, the temporal
and spatial step sizes.   Assume that $k \propto h^\rho$ for $\rho >
1/2$, and assume that $f$ and $g$ are sufficiently regular (more
precisely, admissible in the sense of Definition
\ref{def:admissible}).  Under these conditions, in Sections
\ref{sect:prelim} and \ref{sect:convthm}, we prove that $\hat{p}$
converges to $\tilde{p}$ in $L^1$, and that the error decays
exponentially in $h$.  Specifically, there exists a constant $r >
0$ such that the leading order $L^1$ error
term is proportional to $h^{-1} \exp(-r h^{1/2 - \rho})$---see
Theorem \ref{thm:convergence}.  As a consequence of this result and
the results of \citet{BallyTalay1996}, we conclude that $\hat{p}$ converges
to $p$ in $L^1$, and that the error decays linearly with $h$---see
Corollary \ref{cor:phatp}.

Up to and including Section \ref{sect:convthm}, our results pertain to
an idealized version of the DTQ algorithm in which we track the
density $\hat{p}$ at an infinite number of discrete grid points.  In
Section \ref{sect:trunc}, we study the effect of boundary
truncation.  Our main tool in this section is a Chernoff bound on the
tail sum of $\hat{p}$ that we establish through the moment generating function.
Let $\mathring{p}$ denote the approximation of $\hat{p}$
obtained by summing over precisely $2M+1$ grid points from $-y_M =
-Mk$ to $y_M = Mk$.  The quantity $\mathring{p}$ is what we actually
compute when we implement DTQ.  In Lemma \ref{lem:rL1}, we
show that if $y_M \to \infty$ at a logarithmic rate, i.e., $y_M
\propto \log h^{-1}$, then the $L^1$ error between $\mathring{p}$ and
$\hat{p}$ is $O(h)$.  Combining this with our earlier results, this
establishes $L^1$ convergence of $\mathring{p}$ to the true density
$p$---see Corollary \ref{cor:everything}.

In Section \ref{sect:numexp}, we study the performance of the DTQ
method.  For a suite of six test problems for which we have access to
the exact solution, our numerical tests confirm $O(h)$
convergence of $\mathring{p}$ to $p$.  This remains true for drift $f$
and diffusion functions $g$ that do not strictly satisfy the
hypotheses of our convergence theory.  We also present a finite
difference method for solving (\ref{eqn:kolmo}); we compare this method
against three different implementations of DTQ, and find that
DTQ is competitive.


Before proceeding, we give a more detailed derivation of the DTQ method in Section \ref{sect:setup} and then introduce necessary assumptions and notation in Section \ref{sect:notation}.

\section{Problem Setup}
\label{sect:setup}
We begin with a more detailed derivation of the DTQ method. First, we discretize (\ref{sde}) in time using the explicit Euler-Maruyama method:
\begin{equation}
\label{em}
x_{n+1} = x_n+f(x_{n}) h +g(x_{n})\sqrt{h}Z_{n+1},
\end{equation}
where $h > 0$ is a fixed time step and $Z_{n+1}$ is a random variable with a standard (mean zero, variance one) Gaussian distribution. We let $\tilde{p}(x,t_n)$ denote the probability density function of $x_n$.  Note that this differs from $p(x,t_n)$.  

From (\ref{em}), we observe that the density of $x_{n+1}$ given $x_n =
y$ is Gaussian with mean $y+f(y) h$ and variance $h g^2(y)$.  Let us
denote this conditional density by $\tilde{p}_{n+1, n}(x | y)$; then
\begin{equation}
\label{eqn:Gdef}
\tilde{p}_{n+1, n}(x | y)= G(x,y) := 
\frac{1}{\sqrt{2\pi g^2(y) h}} \exp\left( -\frac{(x-y-f(y) h)^2}{2g^2(y) h} \right).
\end{equation}
Note that, for any $y \in \mathbb{R}$,
\begin{equation}
\label{eqn:Gint}
\int_{x \in \mathbb{R}} G(x,y) \, dx = 1.
\end{equation}
With these definitions, we obtain
\begin{equation}
\label{cdt}
\tilde{p}(x, t_{n+1})=\int_{y \in \mathbb{R}} \tilde{p}_{n+1, n}(x | y ) \tilde{p}(y, t_n) \, dy,
\end{equation}
\noindent
the Chapman-Kolmogorov equation for the discrete-time, continuous-space Markov chain given by (\ref{em}).  Similar equations are often employed in the literature on inference for diffusions---see \citet{Pedersen1995}; \citet{SantaClara1997}; \citet[Chap 6.3.3]{fuchs2013inference}; and \citet{Kou2012}.

Let us define an equispaced temporal grid by $t_n = n h$ with $h =
T/N$.  In principle, we can now repeatedly apply (\ref{cdt}) to
determine $\tilde{p}(x,T)$.  This assumes we can perform the integral
over the real line.

To compute (\ref{cdt}), we use numerical quadrature.  Here
we employ the trapezoidal rule, enabling the use of exponential
error estimates \citep{Trefethen2014,Stenger1993,LundBowers1992}.  To
begin with, we apply the trapezoidal rule on the real line.  Later, we explain how to incorporate the effects of a finite, truncated integration domain.

Assume the domain $\mathbb{R}$ is discretized via an equispaced grid $y_j = j k$ where $k > 0$ is fixed.  Then our discrete-time, discrete-space evolution equation is
\begin{equation}
\label{eqn:phatn1}
\hat{p}(x,t_{n+1}) = k \sum_{j=-\infty}^\infty G(x,y_j) \hat{p}(y_j,t_n).
\end{equation}
Except for the fact that we have not yet truncated the infinite sum, this is the DTQ method.

In what follows, we assume a constant initial condition $X_0 = C$, which 
implies
$p(x,0)=\tilde{p}(x,0)=\delta(x-C)$.  This choice is not essential to
either the use or convergence of the DTQ method.  In fact, the choice
of a point mass initial condition requires special handling, because
we cannot discretize $\tilde{p}(x,0)$ directly.  We insert $n=0$
into (\ref{cdt}), use $\tilde{p}(x,0) = \delta(x-C)$, and obtain
the non-singular initial condition
\begin{equation}
\label{eqn:nonsingic}
\hat{p}(x,t_1) = \tilde{p}(x,t_1) = G(x,C).
\end{equation}
This enables us to iteratively use (\ref{eqn:phatn1}) for $n \geq 1$.

Our main task in Sections \ref{sect:prelim} and \ref{sect:convthm}
is to estimate $\| \hat{p}(\cdot,T) - \tilde{p}(\cdot,T) \|_1$.  Before we start the proof of Theorem \ref{thm:convergence}, we introduce necessary notation and assumptions.

\section{Notation and Assumptions}
\label{sect:notation}
We will use the Roman $\irm$ for the imaginary unit ($\irm =
\sqrt{-1}$) and reserve the Italic $i$ for an index of summation.  We denote the $L^1$ norm of a function $f : \mathbb{R} \to \mathbb{R}$ by
$$
\| f \|_1 = \int_{x \in \mathbb{R}} |f(x)| \, dx.
$$
We denote the $\ell^1$ norm of the sequence $\{\omega_j\}_{j=-\infty}^\infty$ by
$$
\| \omega \|_{\ell^1} = \sum_{j=-\infty}^\infty |\omega_j|.
$$
For a function $f : \mathbb{R} \to \mathbb{R}$, we understand $\| f
\|_{\ell^1}$ to be the norm of the sequence obtained by applying $f$
on a spatial grid:
$$
\| f \|_{\ell^1} = \sum_{j=-\infty}^\infty | f(j k) |,
$$
where again $k > 0$ denotes the grid spacing.  We use 
$\lceil x \rceil$ to denote the smallest integer greater than or equal
to $x$, and $\lfloor x \rfloor$ to denote the largest integer less
than or equal to $x$.
The following definition is from the literature \citep{LundBowers1992}.
\begin{definition}
\label{def:bsd}
For $a > 0$, let $S_a$ denote the infinite strip of width $2a$ given by
$$
S_a = \left\lbrace z \in \mathbb{C}: \ |\Im(z)|<a \right\rbrace.
$$
Then $B(S_a)$ is the set of functions such that $\varphi \in B(S_a)$ iff
$\varphi$ is analytic in $S_a$,
\begin{equation}
\label{eqn:bsddecay}
\int_{-a}^{a} |\varphi(x + \irm y)| \, dy = O(|x|^\alpha), \ \ x \rightarrow \pm \infty, \ \ 0 \leq \alpha <1,
\end{equation}
and
\begin{equation}
\label{eqn:bsdint}
\mathcal{N}(\varphi, S_a) \equiv \lim _{y \rightarrow a^{-}} \biggl\lbrace  \int_{\mathbb{R}}  |\varphi(x+ \irm y)| \, dx  +  \int_{\mathbb{R}}  |\varphi(x- \irm y)| \, dx  \biggr\rbrace < \infty.
\end{equation}
\end{definition}
The next definition encapsulates the constraints that the coefficient functions $f$ and $g$ in the original SDE (\ref{sde}) must satisfy in order for us to show exponential convergence of $\hat{p}$ to $\tilde{p}$.
\begin{definition}
\label{def:admissible}
In this paper, we say that $f$ and $g$ are \emph{admissible} if they
satisfy the following properties.  First, there exists $d > 0$ such
that $f$ and $g$ are analytic on the strip $S_d$.  Additionally, there
exist positive, finite, real constants $M_1$, $M_2$, $M_3$, and $M_4$
such that for all $z \in S_d$,
\begin{subequations}
\label{eqn:assumptions}
\begin{gather}
\label{eqn:lipschitz}
|f'(z)| \leq M_1 \\
\label{eqn:gbound}
M_2 \leq |g(z)| \leq M_3 \\
\label{eqn:gnz}
\Re(g(z)) \neq 0 \\
\label{eqn:glip}
|g'(z)| \leq M_4.
\end{gather}
\end{subequations}
\end{definition}
We now state a theorem that gives an exponential error estimate for the trapezoidal rule \citep{LundBowers1992}, one that we shall use to bound the error made in one step of the DTQ method.  Other error estimates, with different hypotheses, can be found in the literature \citep{Stenger1993, Trefethen2014}.
\begin{theorem}
\label{thm:trap}
Suppose $\varphi \in B(S_a)$ and $k>0$.  Let
$$
\eta = \int_{\mathbb{R}} \varphi(x) \, dx
- k \sum_{j=-\infty}^{\infty} \varphi(j k).
$$
Then
$$
|\eta| \leq \frac{\mathcal{N}(\varphi,S_a)}{2 \sinh(\pi a/k)} \exp{(-\pi a/k)}.
$$
\end{theorem}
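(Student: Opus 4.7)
The plan is to prove this via Poisson summation combined with contour shifting. Since $\varphi \in B(S_a)$, the decay and integrability built into Definition \ref{def:bsd} are exactly what is needed to justify both the summation formula and the shifting of integration contours into the strip $S_a$. My first step would be to apply the Poisson summation formula to rewrite the trapezoidal sum in terms of Fourier transform samples: setting $\hat{\varphi}(\xi) = \int_{\mathbb{R}} \varphi(x) e^{-\irm \xi x}\,dx$, Poisson summation gives
$$
k \sum_{j=-\infty}^{\infty} \varphi(jk) = \sum_{m=-\infty}^{\infty} \hat{\varphi}\bigl( 2\pi m / k \bigr).
$$
Because the $m=0$ term is exactly $\int_{\mathbb{R}} \varphi(x)\,dx$, the error reduces to
$$
\eta = -\sum_{m \neq 0} \hat{\varphi}\bigl( 2\pi m / k \bigr).
$$

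Next I would bound each term $\hat{\varphi}(2\pi m/k)$ for $m \neq 0$ by shifting the contour of integration off the real axis. For $m > 0$, the factor $e^{-\irm (2\pi m/k) z}$ decays as $\Im(z) \to -\infty$, so I would push the contour down to the line $\Im(z) = -a + \epsilon$ (and symmetrically push upward when $m<0$). Analyticity of $\varphi$ on $S_a$ permits this shift, and the growth condition \eqref{eqn:bsddecay} ensures that the integrals over the vertical connecting segments vanish in the limit $|\Re(z)| \to \infty$. What remains is a horizontal integral that picks up a prefactor $e^{-2\pi a |m|/k}$ from the imaginary shift in the exponential, yielding
$$
\Bigl| \hat{\varphi}\bigl( 2\pi m/k \bigr) \Bigr| \leq e^{-2\pi a |m|/k} \int_{\mathbb{R}} \bigl| \varphi\bigl( x - \irm a \operatorname{sgn}(m) \bigr) \bigr|\,dx,
$$
after letting $\epsilon \to 0^+$.

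The final step is to assemble these bounds, splitting the $m \neq 0$ sum into $m \geq 1$ and $m \leq -1$. The two halves of the sum pair with the two integrals making up $\mathcal{N}(\varphi, S_a)$ in \eqref{eqn:bsdint}, so
$$
|\eta| \leq \mathcal{N}(\varphi, S_a) \sum_{m=1}^{\infty} e^{-2\pi a m/k} = \frac{\mathcal{N}(\varphi, S_a)}{e^{2\pi a/k} - 1}.
$$
Using the identity $e^{2\pi a/k} - 1 = 2 e^{\pi a/k} \sinh(\pi a / k)$ converts this into the stated bound $\mathcal{N}(\varphi, S_a) \, e^{-\pi a/k} / [2 \sinh(\pi a / k)]$.

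The delicate part is the contour-shift justification. I need the decay rate $|\varphi(x+\irm y)| \to 0$ (implied by the $L^1$ bounds on horizontal lines combined with analyticity) to be uniform enough in $y \in [-a+\epsilon, a-\epsilon]$ that the vertical segment integrals do vanish. The condition \eqref{eqn:bsddecay} is precisely engineered for this: integrating the bound in $y$ across a vertical segment and using Fubini together with $|\varphi(x + \irm y)|$ being integrable on that segment lets me conclude the vanishing. A secondary (but routine) technical point is justifying the Poisson summation formula in the first place; this follows because, by the same shifting argument, $\hat{\varphi}$ decays exponentially at infinity, so the right-hand series converges absolutely. Everything else is bookkeeping with geometric series.
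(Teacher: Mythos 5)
Your route is genuinely different from the paper's, which simply cites \citet[Theorem 2.20]{LundBowers1992}; the classical argument there is an exact contour-integral representation of the quadrature error along the two lines $\Im z=\pm a$ with the kernel $e^{\pm \irm\pi z/k}/\sin(\pi z/k)$, whereas you use Poisson summation plus a contour shift of $\hat{\varphi}$. Your bookkeeping is right: the $m=0$ term cancels the integral, the shifted contours give $|\hat{\varphi}(2\pi m/k)|\le e^{-2\pi a|m|/k}$ times the corresponding boundary-line $L^1$ norm, and summing the geometric series gives $\mathcal{N}(\varphi,S_a)/(e^{2\pi a/k}-1)$, which equals the stated bound because $e^{2\pi a/k}-1=2e^{\pi a/k}\sinh(\pi a/k)$. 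So the approach is viable and recovers the sharp constant.

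However, the technical justification you lean on is where the real gap sits. Condition \eqref{eqn:bsddecay} is a \emph{growth allowance}, not a decay condition: it says the vertical-segment integrals are $O(|x|^\alpha)$ with $\alpha<1$, which does not make the connecting segments at $\Re z=\pm R$ vanish as $R\to\infty$ (the exponential factor $e^{-\irm\xi z}$ is merely bounded on those segments, so your estimate for them is $O(R^\alpha)$, not $o(1)$). To close this you need an additional ingredient that Definition \ref{def:bsd} does not hand you directly: a Hardy-space-type interior bound, e.g.\ $\sup_{|y|\le a'}\int_{\mathbb{R}}|\varphi(x+\irm y)|\,dx<\infty$ for $a'<a$ (deduced from analyticity, \eqref{eqn:bsdint}, and \eqref{eqn:bsddecay} via a Phragm\'en--Lindel\"of or subharmonicity argument). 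With that, the function $x\mapsto\int_{-a'}^{a'}|\varphi(x+\irm y)|\,dy$ is integrable, so the vertical segments vanish along a sequence $R_n\to\infty$, which suffices for the shift. The same missing ingredient is what legitimizes your use of Poisson summation in the first place: Definition \ref{def:bsd} does not explicitly give $\varphi|_{\mathbb{R}}\in L^1$, pointwise decay of $\varphi$ on the real line, or absolute convergence of $\sum_j\varphi(jk)$, and exponential decay of $\hat{\varphi}$ alone is not a sufficient hypothesis for the summation formula. These points are all repairable by standard (but nontrivial) arguments from the sinc-quadrature literature, which is precisely why the paper delegates the proof to Lund and Bowers; as written, though, your justification of the contour shift rests on a misreading of \eqref{eqn:bsddecay} and would not stand on its own.
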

\begin{proof}See \citet[Theorem 2.20]{LundBowers1992}.  \end{proof}

\section{Preliminary Estimates}
\label{sect:prelim}
In this section, we prove several lemmas that are essential ingredients for the convergence theorem in Section \ref{sect:convthm}.  The overall goal of these lemmas is to show that the integrand
\begin{equation}
\label{eqn:ourintegrand}
\varphi(x,y,t_n) = G(x,y) \hat{p}(y,t_n),
\end{equation}
considered as a function of $y$ for the purposes of quadrature, satisfies the hypotheses of Theorem \ref{thm:trap}.

The first lemma enables us to pass from an estimate of the error made in one time step to an estimate of the error made across a non-zero interval of time, even as the number of time steps becomes infinite.
\begin{lemma}
\label{lem:o1t}
Suppose for the function $\xi : \mathbb{R}^+ \to \mathbb{R}^+$ there exist 
$\gamma > 1$, $\epsilon > 0$ and $h_0 > 0$ such that 
$\xi(h) \leq \epsilon h^\gamma$
for all $h < h_0$.  Fix $T > 0$ and define $h = T/N$ where $N \in \mathbb{N}^+$.  Then 
$$
\lim_{N \to \infty} \left[ h \sum_{j=1}^{N-1} (1 + \xi(h))^{j-1} \right] = T.
$$
\end{lemma}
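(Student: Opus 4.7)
The plan is to recognize the sum as a geometric series and then sandwich it between $h(N-1)$ and $h(N-1)(1+\xi(h))^{N-2}$, using the hypothesis $\gamma > 1$ to show that the geometric factor tends to $1$.

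First I would compute the geometric sum explicitly:
\[
h \sum_{j=1}^{N-1} (1 + \xi(h))^{j-1} = h \cdot \frac{(1+\xi(h))^{N-1} - 1}{\xi(h)}.
\]
The idea is that since $\xi(h) = O(h^\gamma)$ with $\gamma > 1$, the quantity $N\xi(h) \leq \epsilon T^\gamma N^{1-\gamma}$ vanishes as $N \to \infty$, so the term $(1+\xi(h))^{N-1}$ is close to $1$ and the whole expression behaves like $h(N-1)$, which converges to $T$.

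The cleanest route, avoiding division by $\xi(h)$ (which could in principle be zero), is a sandwich argument. For the lower bound, each factor $(1+\xi(h))^{j-1} \geq 1$, so
\[
h \sum_{j=1}^{N-1} (1+\xi(h))^{j-1} \geq h(N-1) = \frac{T(N-1)}{N} \longrightarrow T.
\]
For the upper bound, every summand is dominated by the largest one:
\[
h \sum_{j=1}^{N-1} (1+\xi(h))^{j-1} \leq h(N-1)(1+\xi(h))^{N-2}.
\]
The only nontrivial step is to show that $(1+\xi(h))^{N-2} \to 1$ as $N \to \infty$. Using $\log(1+u) \leq u$ for $u \geq 0$ together with $\xi(h) \leq \epsilon h^\gamma$ and $h = T/N$,
\[
0 \leq (N-2)\log(1+\xi(h)) \leq (N-2)\epsilon h^\gamma \leq \epsilon T^\gamma N^{1-\gamma},
\]
which tends to $0$ because $\gamma > 1$. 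Exponentiating gives $(1+\xi(h))^{N-2} \to 1$, so the upper bound also tends to $T$. The squeeze theorem then yields the claim.

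The main obstacle, such as it is, is really just verifying that the hypothesis $\gamma > 1$ (as opposed to $\gamma \geq 1$) is used in an essential way: it is precisely what forces $N \xi(h) \to 0$, and hence both $(1+\xi(h))^{N-2} \to 1$ and the control on the error in the geometric-sum bound. One should also briefly check that the assumption $\xi(h) \leq \epsilon h^\gamma$ applies for all sufficiently large $N$ (i.e.\ $h < h_0$), which is automatic since $h = T/N \to 0$.
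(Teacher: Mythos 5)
Your proof is correct. It takes a slightly different route from the paper: you sandwich the sum between $h(N-1)$ and $h(N-1)(1+\xi(h))^{N-2}$ and use $\log(1+u)\le u$ together with $\gamma>1$ to show the geometric factor tends to $1$, whereas the paper writes the geometric sum exactly as $\xi(h)^{-1}\bigl[(1+\xi(h))^{N-1}-1\bigr]$, re-expands it by the binomial theorem, and bounds the tail $\sum_{j\ge 2}\binom{N-1}{j}\xi(h)^{j-1}$ by the exponential series, obtaining not just the limit but an explicit correction term of size $O(h^{\gamma-1})$. Your squeeze is arguably cleaner and, as you note, sidesteps any worry about dividing by $\xi(h)$; the trade-off is that, as written, it only delivers the limit, while the paper's quantitative remainder is what is later invoked in the convergence theorem, where the bracketed sum is claimed to be $1+o(h)$ (with $\gamma$ chosen large enough). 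Your estimate $(N-2)\xi(h)\le \epsilon T^{\gamma}N^{1-\gamma}$ in fact contains the same information—combining it with $e^{x}-1\le x e^{x}$ would recover the $O(h^{\gamma-1})$ rate—so the gap is one of presentation rather than substance, but it is worth making that rate explicit if the lemma is to support the stronger claim used downstream.
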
 \begin{proof}Take $N$ sufficiently large so that $h < 1$ and $h < h_0$.  Then 
$$
\sum_{j=1}^{N-1} (1 + \xi(h))^{j-1} =  \xi(h)^{-1} \left[ (1 + \xi(h))^{N-1} - 1 \right] = \sum_{j=1}^{N-1} \binom{N - 1}{j} \xi(h)^{j-1}.
$$
Hence
$$
h \sum_{j=1}^{N-1} (1 + \xi(h))^{j-1} - T = -h + h \sum_{j=2}^{N-1} \binom{N - 1}{j} \xi(h)^{j-1},
$$
implying
$$
-h \leq h \sum_{j=1}^{N-1} (1 + \xi(h))^{j-1} - T \leq \sum_{j=2}^{N-1} \frac{T^j \epsilon^{j-1}}{j!} h^{(\gamma-1)(j-1)} \leq \epsilon^{-1} h^{\gamma - 1} \exp(T \epsilon).
$$
We have shown that the limit is $T$, and that the correction term to the limit is $O(h^{\gamma-1})$.  \end{proof}
The following lemma specializes an $\ell^1$-norm estimate of a discrete Gaussian to the case of our function $G$.
\begin{lemma}
\label{lem:discgauss}
Suppose $g$ is admissible and $h, k > 0$ satisfy
\begin{equation}
\label{eqn:newhk1}
k \leq 2 \pi (\log 2)^{-1/2} M_2 h^{1/2}.
\end{equation}
Then for all $y \in \mathbb{R}$, we have
\begin{equation}
\label{eqn:discgaussl1}
\biggl| 1 -  k \| G(\cdot,y) \|_{\ell^1} \biggr| \leq 4 \exp{ \left( -\frac{2 \pi^2 g^2(y) h}{k^2} \right) }.
\end{equation}
\end{lemma}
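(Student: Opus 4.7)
The plan is to apply Theorem \ref{thm:trap} directly to $\varphi(x) = G(x,y)$ as a function of $x$, with $y \in \mathbb{R}$ held fixed as a parameter. Because $G(\cdot,y) \geq 0$ and $\int_\mathbb{R} G(x,y)\, dx = 1$ by (\ref{eqn:Gint}), the error quantity $\eta$ in Theorem \ref{thm:trap} is exactly $1 - k\|G(\cdot,y)\|_{\ell^1}$, so the lemma reduces to an exponential trapezoidal-quadrature estimate for a single Gaussian on the line.

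To invoke Theorem \ref{thm:trap}, I would first verify $G(\cdot,y) \in B(S_a)$ for every $a > 0$: analyticity in $x$ is automatic since $G(\cdot,y)$ is entire, the decay condition (\ref{eqn:bsddecay}) is immediate because $|G(x+\irm t, y)|$ decays super-polynomially in $|x|$ uniformly for $|t| \leq a$, and completing the square in the complex exponent gives
$$
\int_\mathbb{R} |G(x+\irm t, y)|\, dx = \exp\!\left( \frac{t^2}{2\, g^2(y)\, h} \right),
$$
so $\mathcal{N}(G(\cdot,y), S_a) = 2 \exp\bigl(a^2/(2 g^2(y) h)\bigr)$. Substituting this into Theorem \ref{thm:trap} and rewriting $\exp(-\pi a/k)/\sinh(\pi a/k) = 2/(\exp(2\pi a/k) - 1)$ yields a one-parameter family of bounds indexed by $a$.

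The key step is to choose $a$ so that the Gaussian growth in the numerator just balances the exponential decay in the denominator. Taking $a = 2\pi g^2(y) h/k$ sends the quadratic exponent $a^2/(2 g^2(y) h)$ to $2\pi^2 g^2(y) h/k^2$ and the linear exponent $2\pi a/k$ to $4\pi^2 g^2(y) h/k^2$, producing
$$
\bigl| 1 - k \|G(\cdot,y)\|_{\ell^1} \bigr| \leq \frac{2 \exp\bigl(-2\pi^2 g^2(y) h/k^2\bigr)}{1 - \exp\bigl(-4\pi^2 g^2(y) h/k^2\bigr)}.
$$
To finish, I would combine the admissibility bound $|g(y)| \geq M_2$ from (\ref{eqn:gbound}) with the step-size hypothesis (\ref{eqn:newhk1}); a short calculation gives $4\pi^2 g^2(y) h/k^2 \geq \log 2$, hence $\exp(-4\pi^2 g^2(y) h/k^2) \leq 1/2$, so the denominator is at least $1/2$ and the prefactor $2$ is upgraded to $4$, producing (\ref{eqn:discgaussl1}). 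I expect the only substantive step to be the choice of $a$: it is what converts the generic bound of Theorem \ref{thm:trap} into one whose exponent scales in the right combination $g^2(y) h/k^2$. Everything else is bookkeeping.
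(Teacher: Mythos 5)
Your proposal is correct and follows essentially the same route as the paper: both apply the trapezoidal-rule estimate of Theorem \ref{thm:trap} to the Gaussian $G(\cdot,y)$ with $\mathcal{N} = 2\exp\bigl(a^2/(2 g^2(y) h)\bigr)$, choose the strip half-width $a = 2\pi g^2(y) h / k$ (the paper's $d_\ast$), and use the hypothesis (\ref{eqn:newhk1}) together with $|g(y)| \geq M_2$ to control the $\sinh$ denominator and turn the prefactor $2$ into $4$. The only difference is cosmetic bookkeeping in how the $\sinh$ term is bounded.
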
 \begin{proof}
Let
\begin{equation}
\label{eqn:discgaussdef}
\phi(x) = \frac{1}{\sqrt{2 \pi \sigma^2}} \exp{\left(-\frac{(x-\mu)^2}{2 \sigma^2}\right)}.
\end{equation}
Note that $G$ and $\phi$ coincide when $\mu = y + f(y) h$ and $\sigma^2 = g^2(y) h$.  For any $d > 0$, on the strip $S_d$, $\phi$ satisfies the hypotheses of Theorem \ref{thm:trap}.  We restrict attention to those $d$ satisfying 
$d > (2 \pi)^{-1} k \log 2$, so that
$(\sinh(\pi d/k))^{-1} \leq 4 \exp{\left( -\pi d/k \right)}$.  Then
$$
\int_{x \in \mathbb{R}} \left| \frac{1}{\sqrt{2 \pi \sigma^2}} \exp{\left(-\frac{(x + \irm d-\mu)^2}{2 \sigma^2}\right)} \right| dx = e^{{d^2}/(2\sigma^2)}.
$$
As the right-hand side does not change when we replace $d$ by $-d$, we have $\mathcal{N}(\phi,S_d) = 2 \exp ( d^2 / (2 \sigma^2) )$.  Using Theorem \ref{thm:trap} and $\int_{\mathbb{R}} \phi(x) \, dx = 1$, 
\begin{align}
\biggl| 1  - k \sum_{j=-\infty}^\infty \phi(j k) \biggr| 
& \leq \frac{\exp{\left(d^2/(2 \sigma^2)\right)}}{\sinh(\pi d/k)} \exp{\left(-\frac{\pi d}{k} \right)} \nonumber \\
\label{eqn:newhk3}
& \leq 4\exp{\left(\frac{d^2}{2\sigma^2}-\frac{2\pi d}{k}\right)}.
\end{align}
When $\sigma^2 = g(y) h$, we know by (\ref{eqn:gbound}) and (\ref{eqn:newhk1}) that
$d_\ast = 2 \pi \sigma^2/k \geq (2 \pi)^{-1} k \log 2$,
so we can choose $d = d_\ast$, the minimizer of (\ref{eqn:newhk3}) with respect to $d$, and maintain consistency.  Making this substitution and setting $\sigma^2 = g^2(y) h$, we have (\ref{eqn:discgaussl1}).   \end{proof}
For each $t_n$, we think of $\{ \hat{p}(x_j,t_n) \}_{j=-\infty}^\infty$ as an infinite sequence.  It is important to estimate the $\ell^1$ norm of this sequence.
\begin{lemma}
\label{lem:l1est}
Suppose $g$ is admissible and $h, k > 0$ satisfy (\ref{eqn:newhk1}).  Then for $n \geq 1$,
\begin{multline}
\label{eqn:ell1estimate}
( 1 - 4 \exp(-2 \pi^2 M_2^2 h / k^2))^{n-1} \leq 
\| \hat{p} (\cdot, t_n) \|_{\ell^1} / \| \hat{p}(\cdot,t_1) \|_{\ell^1} \\
\leq ( 1 + 4 \exp(-2 \pi^2 M_2^2 h / k^2))^{n-1},
\end{multline}
and the series defined by (\ref{eqn:phatn1}) converges uniformly.
\end{lemma}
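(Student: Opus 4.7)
The plan is to induct on $n$, with the induction hypothesis packaging together three facts: pointwise nonnegativity of $\hat{p}(\cdot,t_n)$, finiteness of $\|\hat{p}(\cdot,t_n)\|_{\ell^1}$, and the two-sided bound (\ref{eqn:ell1estimate}) at step $n$. The base case $n=1$ is immediate because $\hat{p}(\cdot,t_1)=G(\cdot,C)$ is a strictly positive Gaussian, $\|G(\cdot,C)\|_{\ell^1}<\infty$ by Lemma \ref{lem:discgauss}, and the inequality (\ref{eqn:ell1estimate}) degenerates to $1\leq 1\leq 1$.

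For the inductive step, I would first observe that the strict positivity of the Gaussian kernel $G(x,y)$ together with $\hat{p}(\cdot,t_n)\geq 0$ ensures that every summand in (\ref{eqn:phatn1}) is nonnegative, so $\hat{p}(\cdot,t_{n+1})\geq 0$. Next I would establish uniform convergence of the series defining $\hat{p}(\cdot,t_{n+1})$: by (\ref{eqn:gbound}) we have the uniform pointwise bound $G(x,y_j)\leq (2\pi M_2^2 h)^{-1/2}$, so
$$
\bigl|G(x,y_j)\,\hat{p}(y_j,t_n)\bigr| \leq (2\pi M_2^2 h)^{-1/2}\,\hat{p}(y_j,t_n),
$$
and the right-hand side is a summable $x$-independent majorant by the inductive hypothesis. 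Weierstrass' $M$-test then yields uniform convergence in $x$.

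To propagate (\ref{eqn:ell1estimate}), I would sum (\ref{eqn:phatn1}) over the grid points $x_i=ik$. Since every entry is nonnegative, Tonelli's theorem justifies swapping the two sums, giving
$$
\|\hat{p}(\cdot,t_{n+1})\|_{\ell^1} \;=\; \sum_{j=-\infty}^{\infty}\hat{p}(y_j,t_n)\cdot k\,\|G(\cdot,y_j)\|_{\ell^1}.
$$
Now Lemma \ref{lem:discgauss}, combined with $g^2(y_j)\geq M_2^2$ from (\ref{eqn:gbound}), provides the $j$-uniform two-sided estimate
$$
1 - 4\exp(-2\pi^2 M_2^2 h/k^2) \;\leq\; k\,\|G(\cdot,y_j)\|_{\ell^1} \;\leq\; 1 + 4\exp(-2\pi^2 M_2^2 h/k^2).
$$
Factoring this uniform constant out of the sum yields the one-step contraction/expansion estimate
$$
\bigl(1 - 4\exp(-2\pi^2 M_2^2 h/k^2)\bigr)\|\hat{p}(\cdot,t_n)\|_{\ell^1} \;\leq\; \|\hat{p}(\cdot,t_{n+1})\|_{\ell^1} \;\leq\; \bigl(1 + 4\exp(-2\pi^2 M_2^2 h/k^2)\bigr)\|\hat{p}(\cdot,t_n)\|_{\ell^1},
$$
and iterating this $n-1$ times from $t_1$ delivers (\ref{eqn:ell1estimate}).

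The only genuine subtlety is the ordering of the arguments within the induction: uniform convergence of the series (and hence the very definition of $\hat{p}$ at time $t_{n+1}$ via the infinite sum) relies on the $\ell^1$-bound at time $t_n$, while the Tonelli interchange relies on nonnegativity. Both are supplied by the inductive hypothesis, so the bookkeeping is straightforward once nonnegativity is verified at the outset. No estimate beyond Lemma \ref{lem:discgauss} and the admissibility lower bound on $|g|$ is needed.
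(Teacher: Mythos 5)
Your proposal is correct and follows essentially the same route as the paper's proof: induction carrying nonnegativity, finiteness of the $\ell^1$ norm, and uniform convergence (via the elementary bound $G(x,y)\leq (2\pi M_2^2 h)^{-1/2}$), then a nonnegativity-justified interchange of sums and the uniform two-sided estimate from Lemma \ref{lem:discgauss} with $g^2(y_j)\geq M_2^2$ to propagate the bound one step at a time. The only cosmetic difference is where the uniform-convergence claim sits in the induction bookkeeping, which does not change the argument.
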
 \begin{proof}
We prove this by induction with the base case of $n=1$, for which (\ref{eqn:ell1estimate}) is trivial.  Consider the infinite series on the right-hand side of (\ref{eqn:phatn1}) for $n = 1$ and fixed $h$ and $k$.  Using (\ref{eqn:gbound}), we have the elementary bound $0 \leq G(x,y) \leq (2 \pi M_2^2 h)^{-1/2}$.  Note that (\ref{eqn:nonsingic}) and (\ref{eqn:discgaussl1}) together give us an $\ell^1$ bound on $\{ \hat{p}(jk,t_1) \}_{j=-\infty}^\infty$.  Combining these two bounds, it is clear that (\ref{eqn:phatn1}) converges uniformly for $n=1$, i.e., $\hat{p}(y,t_2)$ converges uniformly.  

Now assume for fixed $n \geq 1$ that (\ref{eqn:ell1estimate}) holds, $\hat{p}(y,t_n) \geq 0$, $\| \hat{p}(\cdot, t_n) \|_{\ell^1}$ is finite, and $\hat{p}(y,t_{n+1})$ converges uniformly.  We now show that these properties hold with $n$ incremented by $1$.  By the induction hypotheses, we see that all terms of the convergent series on the right-hand side of (\ref{eqn:phatn1}) are nonnegative.  Hence $\hat{p}(y,t_{n+1}) \geq 0$.  We evaluate (\ref{eqn:phatn1}) at $x=x_i$:
\begin{equation}
\label{eqn:phatn1xi}
\hat{p}(x_i,t_{n+1}) = k \sum_{j=-\infty}^\infty G(x_i,y_j) \hat{p}(y_j,t_n).
\end{equation}
We take absolute values, sum over all $i$, and interchange the order of summation; this is all justified because all terms are nonnegative.  We obtain
$$
\| \hat{p}(\cdot, t_{n+1}) \|_{\ell^1} = \sum_{j=-\infty}^\infty \left[ k \sum_{i=-\infty}^\infty G(x_i,y_j) \right] \hat{p}(y_j,t_n).
$$
Applying (\ref{eqn:discgaussl1}) and (\ref{eqn:gbound}), we have
\begin{multline}
\label{eqn:ell1estimate_step}
( 1 - 4 \exp(-2 \pi^2 M_2^2 h / k^2)) \| \hat{p}(\cdot, t_n) \|_{\ell^1} \leq
\| \hat{p}(\cdot, t_{n+1}) \|_{\ell^1} \\ \leq  
( 1 + 4 \exp(-2 \pi^2 M_2^2 h / k^2)) \| \hat{p}(\cdot, t_n) \|_{\ell^1}.
\end{multline}
This shows that $\| \hat{p}(\cdot,t_{n+1}) \|_{\ell^1} < \infty$.  Now we return to the right-hand side of (\ref{eqn:phatn1}) with $n$ replaced by $n+1$.  Combining our elementary bound on $G$ with the $\ell^1$ bound on $\hat{p}(\cdot,t_{n+1})$, it is clear that the series converges uniformly.  From (\ref{eqn:ell1estimate_step}) we obtain upper and lower bounds for $\| \hat{p}(\cdot, t_{n+1}) \|_{\ell^1} / \| \hat{p}(\cdot, t_n) \|_{\ell^1}$.  Multiplying appropriately by (\ref{eqn:ell1estimate}), we advance $n$ by $1$.  \end{proof}
One consequence of Lemma \ref{lem:l1est} is that it enables us to give asymptotic conditions on $h$ and $k$ such that $\hat{p}$ is normalized correctly.
\begin{lemma}
\label{lem:l1normalization}
Suppose, in addition to the hypotheses of Lemmas \ref{lem:discgauss} and \ref{lem:l1est}, that $k = r_1 h^\rho$ for constants $r_1 > 0$ and $\rho > 1/2$.  Assume that $N = T/h$ for some fixed $T > 0$.  Then for $1 \leq n \leq N+1$,
\begin{equation}
\label{eqn:asympl1}
\lim_{h \to 0} k \| \hat{p}(\cdot,t_{n}) \|_{\ell^1} = 1.
\end{equation}
\end{lemma}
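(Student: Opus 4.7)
The plan is to combine Lemma \ref{lem:discgauss} (which handles the $n=1$ case via the initial condition) with Lemma \ref{lem:l1est} (which controls the ratio $\|\hat{p}(\cdot,t_n)\|_{\ell^1}/\|\hat{p}(\cdot,t_1)\|_{\ell^1}$), and then show that the crude bounds in Lemma \ref{lem:l1est} tighten to $1$ under the scaling $k = r_1 h^\rho$ with $\rho > 1/2$.

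First, I would use the non-singular initial condition $\hat{p}(x,t_1) = G(x,C)$ from \eqref{eqn:nonsingic}. Applying Lemma \ref{lem:discgauss} with $y = C$ gives
\[
\bigl| 1 - k \| \hat{p}(\cdot,t_1) \|_{\ell^1} \bigr| \leq 4 \exp\!\bigl( -2\pi^2 g^2(C) h / k^2 \bigr).
\]
Substituting $k = r_1 h^\rho$ turns the exponent into $-2\pi^2 g^2(C) r_1^{-2} h^{1-2\rho}$. Since $\rho > 1/2$, the exponent tends to $-\infty$ as $h \to 0$, so $k\|\hat{p}(\cdot,t_1)\|_{\ell^1} \to 1$.

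Second, I would define $\xi(h) = 4\exp(-2\pi^2 M_2^2 r_1^{-2} h^{1-2\rho})$ so that Lemma \ref{lem:l1est} reads
\[
(1 - \xi(h))^{n-1} \leq \frac{\|\hat{p}(\cdot,t_n)\|_{\ell^1}}{\|\hat{p}(\cdot,t_1)\|_{\ell^1}} \leq (1 + \xi(h))^{n-1}.
\]
Because $1 \leq n \leq N+1 = T/h + 1$, I can bound the upper factor by $\exp(N\xi(h))$ and the lower factor by $\exp(-N\xi(h)/(1-\xi(h)))$ (once $\xi(h) < 1/2$, say). The crux is showing $N \xi(h) \to 0$, which reduces to showing
\[
h^{-1} \exp\!\bigl( -2\pi^2 M_2^2 r_1^{-2} h^{1-2\rho} \bigr) \longrightarrow 0 \quad \text{as } h \to 0^{+}.
\]
Since $\rho > 1/2$, the exponent $h^{1-2\rho}$ blows up as $h \to 0$, and the exponential of a negative quantity growing to $-\infty$ overwhelms the polynomial prefactor $h^{-1}$; the limit is indeed $0$. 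Hence both $(1 \pm \xi(h))^{n-1} \to 1$ uniformly in $n$ on the range $1 \leq n \leq N+1$.

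Finally, I would combine the two pieces by writing
\[
k \|\hat{p}(\cdot,t_n)\|_{\ell^1} = \bigl( k \|\hat{p}(\cdot,t_1)\|_{\ell^1} \bigr) \cdot \frac{\|\hat{p}(\cdot,t_n)\|_{\ell^1}}{\|\hat{p}(\cdot,t_1)\|_{\ell^1}},
\]
and apply the two convergence statements above. The main (mild) obstacle is the interplay between the $h^{-1}$ prefactor coming from $N = T/h$ and the exponential in $\xi(h)$; the role of the hypothesis $\rho > 1/2$ is precisely to make the exponential dominate, which is exactly where the standard relationship between the spatial and temporal grid spacings in DTQ enters the argument.
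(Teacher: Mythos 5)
Your proposal is correct and follows essentially the same route as the paper: apply Lemma \ref{lem:discgauss} with $y=C$ and the initial condition \eqref{eqn:nonsingic} to get $k\|\hat{p}(\cdot,t_1)\|_{\ell^1}\to 1$, and use the ratio bounds of Lemma \ref{lem:l1est} with $k = r_1 h^{\rho}$, $\rho > 1/2$, to show the factors $(1\pm 4\exp(-2\pi^2 M_2^2 r_1^{-2}h^{1-2\rho}))^{n-1}$ tend to $1$ uniformly for $n \leq N+1$. Your explicit $\exp(\pm N\xi(h))$ bounds simply fill in the limit \eqref{eqn:sumlimit} that the paper asserts directly, so there is no substantive difference.
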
\begin{proof}
Applying the hypotheses to the exponential terms in (\ref{eqn:ell1estimate}) with $n = N = T/h$, we have
\begin{equation}
\label{eqn:sumlimit}
\lim_{h \to 0} \left( 1 \pm 4 \exp{\left( -2\pi^2 M_2^2 r_1^{-2} h^{-2 \rho + 1} \right )} \right)^{T/h} = 1.
\end{equation}
Consequently, for any $n \in \{0, 1, \ldots, N\}$, we have
\begin{equation}
\label{eqn:limratio}
\lim_{h \to 0} \| \hat{p} (\cdot, t_{n+1}) \|_{\ell^1} / \| \hat{p}(\cdot,t_1) \|_{\ell^1} = 1.
\end{equation}
From (\ref{eqn:nonsingic}) and (\ref{eqn:discgaussl1}), we conclude that $k \| \hat{p}(\cdot,t_1) \|_{\ell^1} \to 1$ as $k \to 0$.  Then (\ref{eqn:asympl1}) follows immediately from (\ref{eqn:limratio}).  \end{proof}

\begin{lemma}
\label{lem:Gbound}
Suppose $f$ and $g$ are admissible and that
$$
a < \min \{d, M_2^2/(2 M_3 M_4) \}.
$$
Then for any $x, y \in \mathbb{R}$, there exist $A_2 > 0$ and $A_1, A_0 \in \mathbb{R}$ such that
\begin{equation}
\label{eqn:absgintermed}
|G(x,y+ \irm a)| = \frac{1}{\sqrt{2 \pi h |g(y + \irm a)|^2}} \exp
\left( -\frac{A_2 x^2 + A_1 x + A_0}{4 |g(y+\irm a)|^4 h} \right),
\end{equation}
and there exists $\gamma_0 \in (0,2)$ such that
$$
|G(x,y + \irm a)| \leq \frac{1}{\sqrt{2 \pi h M_2^2}} \exp \left( \frac{ a^2 (1 + h M_1)^2 }{h \gamma_0 M_2^2 } \right).
$$
\end{lemma}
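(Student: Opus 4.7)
\textit{Setup and the form (\ref{eqn:absgintermed}).} Let $w = y + \irm a$, $u := w + f(w) h$, and $v := g^2(w)$. Since $|e^z| = e^{\Re z}$ and $|(2\pi v h)^{-1/2}| = (2\pi h |v|)^{-1/2}$, definition (\ref{eqn:Gdef}) gives
$$|G(x, w)| = \frac{1}{\sqrt{2\pi h |g(w)|^2}} \exp\!\Bigl( \Re\bigl[-(x-u)^2/(2vh)\bigr] \Bigr).$$
Using $1/v = \bar v/|v|^2$ and expanding $(x-u)^2 = x^2 - 2 xu + u^2$, the real part of the exponent becomes $-(A_2 x^2 + A_1 x + A_0)/(4 |g(w)|^4 h)$ with $A_2 = 2 \Re(v)$, $A_1 = -4 \Re(u \bar v)$, $A_0 = 2 \Re(u^2 \bar v)$, which is exactly (\ref{eqn:absgintermed}).

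\textit{Positivity of $A_2$.} Since $g$ is real-valued on $\mathbb{R}$ (for the SDE to make sense), continuous, and $|g|\geq M_2>0$, one may assume $g \geq M_2$ on $\mathbb{R}$; condition (\ref{eqn:gnz}) together with connectedness of $S_d$ then forces $\Re(g) > 0$ throughout $S_d$. Integrating $g'$ along the vertical segment from $y$ to $w$ and using (\ref{eqn:glip}) gives $|g(w)-g(y)| \leq a M_4$, hence $\Re(g(w)) \geq M_2 - aM_4$ and $|\Im(g(w))| \leq a M_4$. Consequently,
$$\Re(g^2(w)) = \Re(g(w))^2 - \Im(g(w))^2 \geq (M_2 - aM_4)^2 - (aM_4)^2 = M_2^2 - 2 M_2 a M_4 = \tfrac12 M_2^2 \gamma_0,$$
with $\gamma_0 := 2 - 4 a M_4/M_2$. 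The hypothesis $a < M_2^2/(2 M_3 M_4) \leq M_2/(2 M_4)$ (using $M_3 \geq M_2$) places $\gamma_0 \in (0, 2)$, so in particular $A_2 > 0$.

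\textit{Optimization in $x$ and the final bound.} With $A_2 > 0$, the quadratic in $x$ attains its minimum $A_0 - A_1^2/(4A_2)$, so
$$\max_{x \in \mathbb{R}} \Re\bigl[-(x-u)^2/(2vh)\bigr] = \bigl(A_1^2/(4A_2) - A_0\bigr)/(4 |v|^2 h).$$
Writing $u = u_1 + \irm u_2$, $v = v_1 + \irm v_2$ and expanding yields the identity $\Re(u\bar v)^2 - \Re(v)\Re(u^2\bar v) = \Im(u)^2 |v|^2$, which reduces the right-hand side to $\Im(u)^2/(2 \Re(v) h)$. Since $\Im(u) = a + \Im[f(w)-f(y)]\,h$ and $|f(w)-f(y)| \leq a M_1$ by (\ref{eqn:lipschitz}), we have $|\Im(u)| \leq a(1+h M_1)$. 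Combining this with $2\Re(v) \geq M_2^2 \gamma_0$ and the prefactor bound $|g(w)| \geq M_2$ from (\ref{eqn:gbound}) produces the claimed inequality.

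\textit{Main obstacle.} The conceptual crux is the quantitative positivity $\Re(g^2(w)) \geq \tfrac12 M_2^2 \gamma_0 > 0$; this draws on all four admissibility conditions and on the tightness of the bound $a < M_2^2/(2M_3 M_4)$. Everything else reduces to a completion of the square in $x$, with the identity $\Re(u\bar v)^2 - \Re(v)\Re(u^2\bar v) = \Im(u)^2|v|^2$ being the only nonroutine algebraic step.
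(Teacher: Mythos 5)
Your proof is correct and follows essentially the same route as the paper: direct computation of $|G(x,y+\irm a)|$ with $A_2 = 2\Re(g^2(y+\irm a))$, completion of the square in $x$ (your identity $\Re(u\bar v)^2-\Re(v)\Re(u^2\bar v)=\Im(u)^2|v|^2$ reproduces exactly the paper's maximized exponent $\Im(u)^2/(2\Re(v)h)$), and control of $\Im\bigl(y+\irm a+f(y+\irm a)h\bigr)$ by integrating $f'$ along a vertical segment using (\ref{eqn:lipschitz}). The only divergence is the positivity of $A_2$: you integrate $g'$ to get $\Re(g(y+\irm a))\geq M_2-aM_4$ and $|\Im(g(y+\irm a))|\leq aM_4$ (after the harmless normalization $g\geq M_2$ on $\mathbb{R}$), yielding the slightly sharper constant $\gamma_0=2-4aM_4/M_2$, whereas the paper applies the mean value theorem to $\theta(y,\epsilon)=g^2(y+\irm\epsilon)+g^2(y-\irm\epsilon)$ with $|\partial\theta/\partial\epsilon|\leq 4M_3M_4$ and gets $\gamma_0=2-4aM_3M_4/M_2^2$; since the lemma asserts only the existence of some $\gamma_0\in(0,2)$ and your larger $\gamma_0$ strengthens the bound (and still serves the later uses in Lemma \ref{lem:membership} and Theorem \ref{thm:convergence}), this difference is immaterial.
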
 \begin{proof} 
We obtain (\ref{eqn:absgintermed}) by direct calculation of $|G(x,y+ \irm a)|$. The coefficients $A_2$, $A_1$, and $A_0$ are defined by
\begin{subequations}
\label{eqn:acoeffs}
\begin{align}
A_2 &= g^2(y - \irm a) + \text{c.c.} \\
A_1 &= -2 g^2(y - \irm a)(y + \irm a + f (y + \irm a) h) + \text{c.c.} \\
A_0 &= g^2(y - \irm a)(y^2 - a^2 + f^2(y + \irm a) h^2 \nonumber \\
 &\quad + 2 y \irm a + 2 (y + \irm a) f(y + \irm a) h) + \text{c.c.}
\end{align}
\end{subequations}
By ``c.c.'' we mean the complex conjugate of all preceding terms.  We have used the fact that because $f$ and $g$ are analytic on $S_d$, and because they are real-valued when restricted to the real axis, both $f$ and $g$ commute with complex conjugation.  That is, $\overline{f(y + \irm a)} = f(y - \irm a)$ and similarly for $g$ and $g^2$.  The upshot is that $A_2$, $A_1$, and $A_0$ are all real.

Let us now prove that $A_2 > 0$.  Define the function
$$
\theta(y,\epsilon) = g^2(y-\irm \epsilon)+g^2(y+\irm \epsilon),
$$
for $\epsilon \in [0,d)$.  For each fixed $y$, by the mean-value theorem, there exists $\xi$ such that
$$
\theta(y,\epsilon) - \theta(y,0) = \epsilon \frac{\partial \theta}{\partial \epsilon}(y,\xi).
$$
Note that $\xi$ may depend on $\epsilon$ and $y$.  Now we use (\ref{eqn:assumptions}) to compute
\begin{equation}
\label{eqn:usingm3}
\sup_{y \in \mathbb{R}, \epsilon \in (-d,d)} \left| \frac{\partial
    \theta}{\partial \epsilon} \right| = 4 \! \! \! \sup_{\substack{y \in \mathbb{R}\\ \epsilon \in (-d,d)}} \left| \Im(g(y+\irm \epsilon)g'(y + \irm \epsilon)) \right| \leq 4 M_3 M_4.
\end{equation}
Then using the previous two equations together with (\ref{eqn:gbound}), we have
\begin{equation}
\label{eqn:thetalb}
\theta(y,\epsilon) \geq \theta(y,0) - 4 \epsilon M_3 M_4 \geq 2 M_2^2 - 4 \epsilon M_3 M_4.
\end{equation}
The right-hand side will be positive as long as $\epsilon < \min\{d, M_2^2 / (2 M_3 M_4)\}$.  Given the hypothesis on $a$ in the statement of the lemma, $\theta(y,a) = A_2$ will be positive.  Because $A_2 > 0$, we can maximize the right-hand side of (\ref{eqn:absgintermed}) as a function of $x$; the global maximum
occurs at $x = -A_1/(2 A_2)$.  Then we have
$$
|G(x,y + \irm a)| \leq \frac{1}{\sqrt{2 \pi h M_2^2}} \exp \left( \frac{(2 a + \irm h (f(y - \irm a) - f(y + \irm a)) )^2}{4 h (g^2(y + \irm a) + g^2(y - \irm a))} \right).
$$
We suppose that $a = b M_2^2/(2 M_3 M_4)$ for some $b \in (0,1)$ such that $a < d$.  Then the lower bound (\ref{eqn:thetalb}) implies $\theta(y,a) \geq 2M_2^2 (1 - b)$.  We define $\gamma_0 = 2(1-b) \in (0,2)$ and write
\begin{multline}
\label{eqn:Gbound_almost}
|G(x,y + \irm a)| \leq \frac{1}{\sqrt{2 \pi h M_2^2}} \\
\times \exp \left( \frac{(2 a + \irm h (f(y - \irm a) - f(y + \irm a)) )^2}{h \gamma_0 M_2^2 } \right).
\end{multline}
Let $\Gamma$ be the segment connecting $y - \irm a$ to $y + \irm a$.  Note that $a < d$ implies that $\Gamma$ is completely contained in the strip $S_d$ where $f$ is analytic.  Using (\ref{eqn:lipschitz}), we have
\begin{align*}
| 2 a + \irm h(f(y - \irm a) - f&(y + \irm a)) | \\
 &\leq 2 | a | + h |f(y + \irm a) - f(y - \irm a)| \\
 &\leq 2 |a| + h \left| \oint_\Gamma f'(z) \, dz \right| \\
 &\leq 2 |a| + h \oint_\Gamma |f'(z)| \, |dz| \\
 &\leq 2 |a| (1 + h M_1)
\end{align*}
Using this estimate in (\ref{eqn:Gbound_almost}) finishes the proof.  \end{proof}

\begin{lemma}
\label{lem:membership}
Suppose that $f$ and $g$ are admissible, that $h, k > 0$ satisfy (\ref{eqn:newhk1}), and that $a < \min\{d, M_2^2/(2 M_3 M_4)\}$.  Then the integrand (\ref{eqn:ourintegrand}), considered as a function of $y$, is a member of $B(S_a)$, i.e., $\varphi(x,\cdot,t_n) \in B(S_a)$.
\end{lemma}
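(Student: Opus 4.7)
The plan is to verify, for the function $\psi(y) := \varphi(x, y, t_n) = G(x, y)\hat{p}(y, t_n)$ with $x$ and $t_n$ fixed, each of the three defining properties of $B(S_a)$ in Definition \ref{def:bsd}: analyticity on the strip, the horizontal growth condition (\ref{eqn:bsddecay}), and finiteness of $\mathcal{N}(\psi, S_a)$.

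For analyticity on $S_a$, first observe that $G(x, \cdot)$ is analytic on $S_d \supset S_a$ using admissibility: (\ref{eqn:gbound}) forces $g^2$ to avoid $0$, and (\ref{eqn:gnz}) keeps $g$ in one open half-plane so that $g^2$ avoids the negative real axis and $1/\sqrt{g^2(\cdot)}$ admits a single-valued analytic branch. For $\hat{p}(\cdot, t_n)$, I would proceed by induction on $n$. The base case (\ref{eqn:nonsingic}) reads $\hat{p}(y, t_1) = G(y, C)$, which is entire in $y$ since $f(C)$ and $g(C)$ are real constants. For the inductive step, each term $G(y, y_j)$ in the recursion (\ref{eqn:phatn1}) is entire in $y$; combining Gaussian decay in $y_j$, uniform on any compact $y$-set (here (\ref{eqn:lipschitz}) with $M_1 h < 1$ guarantees $(y_j - f(y_j)h)^2$ grows quadratically in $y_j$), with the $k$-weighted $\ell^1$-summability of $\{\hat{p}(y_j, t_{n-1})\}$ from Lemma \ref{lem:l1est}, the Weierstrass M-test yields uniform convergence on compacts and hence analyticity.

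To bound $\mathcal{N}(\psi, S_a)$, I would pair the uniform bound on $|G(x, u \pm ia)|$ from Lemma \ref{lem:Gbound} with a bound on $\int_\mathbb{R} |\hat{p}(u + iv, t_n)| \, du$. Substituting the recursion and interchanging sum and integral (legal since $\hat{p}(y_j, t_{n-1}) \geq 0$), a direct Gaussian integration gives
\begin{equation*}
\int_\mathbb{R} |G(u+iv, y_j)| \, du = \exp\!\bigl(v^2 / (2 g^2(y_j) h)\bigr) \leq \exp\!\bigl(a^2/(2 M_2^2 h)\bigr),
\end{equation*}
since $y_j$ is real and hence $f(y_j), g(y_j)$ are real scalars. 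Thus $\int_\mathbb{R} |\hat{p}(u+iv, t_n)| \, du \leq \exp(a^2/(2 M_2^2 h)) \|k\hat{p}(\cdot, t_{n-1})\|_{\ell^1}$, which is finite by Lemma \ref{lem:l1est}, and multiplying by the $|G|$-bound yields $\mathcal{N}(\psi, S_a) < \infty$.

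For the growth condition (\ref{eqn:bsddecay}), the same termwise manipulation in the recursion—this time bounding the $y_j$-Gaussian pointwise by $1$ rather than integrating—produces $|\hat{p}(u+iv, t_n)| \leq (2\pi M_2^2 h)^{-1/2} \exp(a^2/(2 M_2^2 h)) \|k\hat{p}(\cdot, t_{n-1})\|_{\ell^1}$, uniformly for all $u \in \mathbb{R}$ and $|v| \leq a$. Combined with Lemma \ref{lem:Gbound}, this gives $\int_{-a}^a |\psi(u+iv)| \, dv = O(1)$, so (\ref{eqn:bsddecay}) holds trivially with $\alpha = 0$. The main obstacle is the inductive analyticity step: one must simultaneously pin down the single-valued $\sqrt{g^2}$ branch using (\ref{eqn:gnz}) and verify uniform convergence on compacts by combining Gaussian decay in $y_j$ with the $\ell^1$ bound of Lemma \ref{lem:l1est}; once that is in place, the remaining conditions reduce to direct applications of Lemma \ref{lem:Gbound} and the Fubini-style Gaussian computation above.
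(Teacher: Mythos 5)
Your proposal is correct and follows essentially the same route as the paper: verify the three conditions of Definition \ref{def:bsd} by induction through the recursion (\ref{eqn:phatn1}), using the identity $|G(u+\irm v,y_j)| = G(u,y_j)\exp\bigl(v^2/(2g^2(y_j)h)\bigr)$ together with the sup bound of Lemma \ref{lem:Gbound}, the $\ell^1$ estimate of Lemma \ref{lem:l1est}, and uniform convergence to get analyticity of $\hat{p}(\cdot,t_n)$ on the strip. The only differences are cosmetic: you make the branch choice for $(g^2)^{-1/2}$ via (\ref{eqn:gnz}) explicit, and you establish (\ref{eqn:bsddecay}) from a uniform bound on $|\hat{p}(u+\irm v,t_n)|$ (which suffices since $\alpha=0$ is allowed), whereas the paper bounds the strip integral by a multiple of $\hat{p}(y,t_{n+1})$ and invokes its decay.
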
 \begin{proof}
There are three conditions for membership in $B(S_a)$, which we verify in turn.  First, we check that $\varphi$ is analytic on $S_a$.  At time step $t_1$, we have $\hat{p}(y,t_1) = G(y,C)$, the analyticity of which follows from (\ref{eqn:gnz}) and the lower bound in (\ref{eqn:gbound}).  The arguments made earlier regarding the convergence of (\ref{eqn:phatn1xi}) hold equally well with $x_i$ replaced by any $x$.  This implies that for $n \geq 1$,  $\hat{p}(y,t_{n+1})$ is analytic in $y$ on $S_d$, so the integrand $\varphi$ is analytic on $S_a \subset S_d$.  
Next, we consider
\begin{equation}
\label{eqn:bigphi}
\Phi(x,y,t_n) = \int_{b=-a}^{a} |\varphi(x,y + \irm b,t_n)| \, db.
\end{equation}
Let $z_j = jk$.  Since
\begin{equation}
\label{eqn:phiyplusia}
\hat{p}(y + \irm a,t_{n+1}) = k \sum_{j=-\infty}^{\infty}G(y + \irm a,z_j)\hat{p}(z_j,t_n),
\end{equation}
we have
\begin{align*}
&\Phi(x,y,t_{n+1}) \\
&\leq k \sum_{j=-\infty}^{\infty}\hat{p}(z_j,t_n)\int_{b=-a}^{a}|G(y+  \irm b, z_j)| |G(x, y + \irm b)| \, db  \\
&= k \sum_{j=-\infty}^{\infty}\hat{p}(z_j,t_n) G(y,z_j) \\
& \qquad \times  \int_{b=-a}^{a} \exp{\left(\dfrac{b^2}{2g^2(z_j)h}\right)} |G(x,y+
  \irm b)| \, db \\
& \leq  \frac{1}{\sqrt{2 \pi h M_2^2}}  \\
& \qquad \times \int_{b=-a}^{a} \exp{\left(\dfrac{b^2}{2 M_2^2 h}\right)}  \exp \left( \frac{ b^2 (1 + h M_1)^2 }{h \gamma_0 M_2^2 } \right) \, db \\
& \qquad \times \underbrace{k \sum_{j=-\infty}^{\infty}\hat{p}(z_j,t_n) G(y,z_j)}_{\hat{p}(y, t_{n+1})}.
\end{align*}
To derive the last inequality, we have applied Lemma \ref{lem:Gbound} and (\ref{eqn:gbound}).  By Lemma \ref{lem:l1est}, we know $\hat{p}(y, t_{n+1})$ converges uniformly.  We integrate with respect to $y$, bring the integral into the sum, and use (\ref{eqn:Gint}).  In this way, we derive $\| \hat{p}(\cdot, t_{n+1}) \|_1 = k \| \hat{p}(\cdot, t_n) \|_{\ell^1} < \infty$.  Therefore, $\hat{p}(y, t_{n+1}) \to 0$ as $|y| \to \infty$; in the same limit, we have $\Phi(x,y,t_{n+1}) = O(|y|^\alpha)$ for $\alpha = 0$, satisfying (\ref{eqn:bsddecay}).

Next, we establish a bounded, real function $L_n$ such that for each $x \in \mathbb{R}$,
\begin{multline}
\label{eqn:condtn1}
\mathcal{N} := \int_{\mathbb{R}} |G(x,y + \irm a) \hat{p}(y + \irm a,t_n)| \, dy \\
+ \int_{\mathbb{R}} |G(x,y - \irm a) \hat{p}(y - \irm a,t_n)| \, dy 
\leq L_n(x).
\end{multline}
We need this estimate in order to apply Theorem
\ref{thm:trap}. For this purpose, we seek an upper bound on $\mathcal{N}$ that does not depend essentially on the spatial discretization parameter $k$. 
Starting again from (\ref{eqn:phiyplusia}), we have
\begin{align}
&\int_{y \in \mathbb{R}} |G(x,y + \irm a) \hat{p}(y + \irm a,t_{n+1})| \, dy \nonumber \\
&\leq k \sum_{j=-\infty}^{\infty}\hat{p}(z_j,t_n) 
\int_{\mathbb{R}}|G(y+  \irm a, z_j)| |G(x, y + \irm a)| \, dy \nonumber \\
&= k \sum_{j=-\infty}^{\infty}\hat{p}(z_j,t_n) \nonumber \\
&\qquad \int_{\mathbb{R}} \exp{\left(\dfrac{a^2}{2g^2(z_j)h}\right)}G(y,z_j)|G(x,y+ \irm a)| \, dy \nonumber \\
\begin{split}
\label{eqn:condtn1begin}
&\leq k \exp{\left(\dfrac{a^2}{2M_2^2h}\right)} \\
& \quad \times  \sum_{j=-\infty}^{\infty}\hat{p}(z_j,t_n) 
 \int_{\mathbb{R}} G(y,z_j)|G(x,y+ \irm a)| \, dy
\end{split}  \\
\label{eqn:condtn1intermed}
& \leq k \exp{\left(\dfrac{a^2}{2M_2^2h}\right)}\| \hat{p}(\cdot, t_n)\|_{{\ell^1}} \psi(x,a),
\end{align}
where 
\begin{equation}
\label{eqn:psidef}
\psi(x,a) = \sup_{z \in \mathbb{R}}\left[\int_{y \in \mathbb{R}}G(y,z)|G(x,y+ \irm a)|dy\right].
\end{equation}
Examining (\ref{eqn:absgintermed}), we see that the right-hand side of (\ref{eqn:condtn1intermed}) is invariant under the reflection $a \mapsto (-a)$.  We define the real-valued function
$$
L_{n+1}(x) = 2 k \exp{\left(\dfrac{a^2}{2M_2^2h}\right)}\| \hat{p}(\cdot, t_n)\|_{{\ell^1}} \psi(x,a),
$$
and note that (\ref{eqn:condtn1intermed}) implies $\mathcal{N} \leq
L_n(x)$, as required by (\ref{eqn:condtn1}).
Our task now is to demonstrate that $L_n$ is finite.  By Lemma
\ref{lem:Gbound} and (\ref{eqn:Gint}), we have
$$
\psi(x,a) \leq \frac{1}{\sqrt{2 \pi h M_2^2}} \exp \left( \frac{ a^2 (1 + h M_1)^2 }{h \gamma_0 M_2^2 } \right).
$$
Using this estimate in (\ref{eqn:condtn1intermed}), we obtain
$$
L_{n+1}(x) \leq 2 k \exp{\left(\dfrac{a^2}{2M_2^2h}\right)}\| \hat{p}(\cdot, t_n)\|_{{\ell^1}} 
\frac{1}{\sqrt{2 \pi h M_2^2}} \exp \left( \frac{ a^2 (1 + h M_1)^2 }{h \gamma_0 M_2^2 } \right).
$$
Note that the bound on the right-hand side does not depend on $x$ at all.  The dependence on $k$ is confined to the terms $k \| \hat{p}(\cdot,t_{n}) \|$.  By Lemmas \ref{lem:discgauss} and \ref{lem:l1est} together with (\ref{eqn:discgaussl1}), 
\begin{multline*}
k \| \hat{p}(\cdot,t_n) \| \leq \left( 1 + 4 \exp(-2 \pi^2 g^2(C)
  h/k^2) \right) \\ \times \left(1 + 4 \exp (-2 \pi^2 M_2^2 h/k^2) \right)^{n-1}
\leq 5^n < \infty
\end{multline*}
for all $k \geq 0$.  In sum, we have shown that for fixed $h > 0$, fixed $n \geq 1$, and $a < \min\{d, M_2^2/(2 M_3 M_4)\}$, $L_n(x)$ is bounded uniformly in $x$ and $k$.  We have demonstrated that (\ref{eqn:condtn1}) holds.  Thus $\varphi(x,\cdot,t_n) \in B(S_a)$.   \end{proof}

\section{Convergence Theorem}
\label{sect:convthm}
Let
\begin{equation}
\label{eqn:errordef}
E(y,t_n) = \tilde{p}(y,t_n) - \hat{p}(y, t_n).
\end{equation}
In this section, we establish conditions under which $\| E(\cdot,T)
\|_1$ goes to zero at an exponential rate.
\begin{theorem}
\label{thm:convergence}
Suppose that $f$ and $g$ are admissible in the sense of Definition
\ref{def:admissible}.  Assume that 
\begin{equation}
\label{eqn:assumption2}
k = r_1 h^\rho
\end{equation}
for constants $r_1 > 0$ and $\rho > 1/2$.  Choose $a < \min\{d, M_2^2/(2 M_3 M_4)\}$ such that
\begin{equation}
\label{eqn:assumption3}
a = r_2  h^{1/2}
\end{equation}
for some $r_2 > 0$.  Assume that $h, k$ satisfy (\ref{eqn:newhk1}) and that $k < 2 \pi a/\log 2$.  For fixed $T > 0$, choose
\begin{equation}
\label{eqn:assumption4}
h \in \left(0, \min \{T, (M_2^2/(4 M_3 M_4 r_2))^2\} \right)
\end{equation}
such that $N = T/h \in \mathbb{N}^+$.  To be clear, $r_1$ and $r_2$ are constants that do not depend on $h$.  Then
\begin{equation}
\label{eqn:finerr}
\| E(\cdot,T) \|_1  \leq
c_\star h^{-1} \exp(-2 \pi r_2 r_1^{-1} h^{1/2 - \rho}) (1 + o(h) + o(k))
\end{equation}
where $o(h)$ and $o(k)$ stand for terms that vanish as $h \to 0$ and $k \to 0$, and $c_\star > 0$ is a constant that does not depend on $h$.
\end{theorem}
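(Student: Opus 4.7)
The plan is to reduce the global $L^1$ error to a telescoping sum of one-step trapezoidal quadrature errors, and bound each term using Theorem \ref{thm:trap} together with the preliminary estimates of Section \ref{sect:prelim}. Subtracting the DTQ update (\ref{eqn:phatn1}) from the Chapman-Kolmogorov recursion (\ref{cdt}), and inserting $\pm\int G(x,y)\hat{p}(y,t_n)\,dy$, I get
\begin{equation*}
E(x,t_{n+1}) = \int_{\mathbb{R}} G(x,y) E(y,t_n)\,dy + \eta(x,t_n),
\end{equation*}
where $\eta(x,t_n) = \int G(x,y)\hat{p}(y,t_n)\,dy - k\sum_j G(x,y_j)\hat{p}(y_j,t_n)$ is exactly the trapezoidal error for the integrand $\varphi(x,y,t_n)=G(x,y)\hat{p}(y,t_n)$. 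Taking absolute values, integrating in $x$, using Fubini and (\ref{eqn:Gint}), yields the stable recurrence $\|E(\cdot,t_{n+1})\|_1 \leq \|E(\cdot,t_n)\|_1 + \|\eta(\cdot,t_n)\|_1$. Since (\ref{eqn:nonsingic}) gives $E(\cdot,t_1)=0$, a telescoping argument produces $\|E(\cdot,T)\|_1 \leq \sum_{n=1}^{N-1}\|\eta(\cdot,t_n)\|_1$.

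Next I would bound each $\|\eta(\cdot,t_n)\|_1$. By Lemma \ref{lem:membership}, $\varphi(x,\cdot,t_n)\in B(S_a)$, so Theorem \ref{thm:trap} gives $|\eta(x,t_n)|\leq \mathcal{N}(\varphi(x,\cdot,t_n),S_a)(2\sinh(\pi a/k))^{-1}\exp(-\pi a/k)$. After integrating in $x$ and swapping the order,
\begin{equation*}
\int_{\mathbb{R}}\mathcal{N}(\varphi(x,\cdot,t_n),S_a)\,dx = \sum_{\pm}\int_{\mathbb{R}}|\hat{p}(y\pm \irm a,t_n)|\int_{\mathbb{R}}|G(x,y\pm \irm a)|\,dx\,dy.
\end{equation*}
For the inner integral I invoke Lemma \ref{lem:Gbound}: the formula (\ref{eqn:absgintermed}) shows $|G(x,y\pm \irm a)|$ is a genuine Gaussian in $x$ with leading coefficient $A_2\geq M_2^2\gamma_0>0$, so its $L^1$ norm in $x$ is $O(1)\cdot \exp(a^2(1+hM_1)^2/(h\gamma_0 M_2^2))$. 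For the outer integrand, the crucial observation is that shifting $y \mapsto y+\irm a$ inside the Gaussian kernel merely adds $a^2/(2g^2(z_j)h)$ to the exponent, giving the pointwise bound
\begin{equation*}
|\hat{p}(y\pm \irm a,t_n)| \leq \exp\!\left(\frac{a^2}{2 M_2^2 h}\right) \hat{p}(y,t_n),
\end{equation*}
by applying this estimate term-by-term to (\ref{eqn:phatn1}). Consequently $\int|\hat{p}(y\pm \irm a,t_n)|\,dy \leq \exp(a^2/(2M_2^2h))\cdot k\|\hat{p}(\cdot,t_{n-1})\|_{\ell^1}$, and Lemmas \ref{lem:l1est}, \ref{lem:l1normalization} guarantee $k\|\hat{p}(\cdot,t_{n-1})\|_{\ell^1}$ is uniformly bounded (tending to $1$).

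To close the argument I would substitute the scalings $k=r_1h^{\rho}$ and $a=r_2h^{1/2}$. Because $a^2/h=r_2^2$, both exponential prefactors $\exp(a^2/(2M_2^2h))$ and $\exp(a^2(1+hM_1)^2/(h\gamma_0 M_2^2))$ stay bounded as $h\to 0$, so $\int\mathcal{N}(\varphi(x,\cdot,t_n),S_a)\,dx \leq C$ for some $C$ independent of $n$ and $h$. Since $\pi a/k=\pi r_2 r_1^{-1}h^{1/2-\rho}\to\infty$ (by $\rho>1/2$), we have $(2\sinh(\pi a/k))^{-1}\exp(-\pi a/k)=\exp(-2\pi a/k)(1+o(1))$, yielding $\|\eta(\cdot,t_n)\|_1 \leq c_\star T^{-1}\exp(-2\pi r_2 r_1^{-1}h^{1/2-\rho})(1+o(h)+o(k))$. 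Summing from $n=1$ to $N-1=T/h-1$ produces the $h^{-1}$ factor in (\ref{eqn:finerr}).

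The main obstacle is the analytic continuation bookkeeping. One must control $\hat{p}$ on horizontal lines in $S_a$ without letting an exponential factor be raised to the power $n\sim h^{-1}$---which would be fatal. The trick that avoids this is the one-shot pointwise bound $|\hat{p}(y\pm \irm a,t_n)|\leq \exp(a^2/(2M_2^2h))\hat{p}(y,t_n)$, which bypasses iterative use of the analytic extension, and the key scaling observation $a^2/h=r_2^2$ which renders that exponential $O(1)$ uniformly in $h$. The only other delicate point is verifying that the various $o(1)$ corrections (coming from $\sinh$ versus $e$, and from $k\|\hat{p}(\cdot,t_{n-1})\|_{\ell^1}\to 1$) are genuinely $o(h)+o(k)$ and can be absorbed into the stated error form.
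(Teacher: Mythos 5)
Your proposal is correct and follows essentially the same route as the paper: the telescoping recursion $\|E(\cdot,t_{n+1})\|_1 \leq \|E(\cdot,t_n)\|_1 + \|\tau(\cdot,t_n)\|_1$, the trapezoidal bound via Lemma \ref{lem:membership} and Theorem \ref{thm:trap}, the identity $|G(y+\irm a,z_j)| = \exp\bigl(a^2/(2g^2(z_j)h)\bigr)G(y,z_j)$ to control $\hat{p}$ on horizontal lines, Lemma \ref{lem:Gbound} for the $x$-integral, and the scalings $a^2/h = r_2^2$, $a/k \propto h^{1/2-\rho}$. The only cosmetic difference is that you bound each $\|\tau(\cdot,t_n)\|_1$ uniformly in $n$ and multiply by $N-1$, whereas the paper retains the factor $(1+4\exp(-2\pi^2 M_2^2 h/k^2))^{n-1}$ and sums it via Lemma \ref{lem:o1t}; both yield the same $h^{-1}$ prefactor and error structure.
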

\begin{proof}
We begin with
\begin{align*}
\tilde{p}(x,t_{n+1}) &= \int_{y \in \mathbb{R}} G(x,y) \tilde{p}(y,t_n) \, dy\\
&= \int_{y \in \mathbb{R}}G(x,y)\hat{p}(y, t_n) \, dy +\int_{y \in \mathbb{R}} G(x,y)E(y, t_n) \, dy.
\end{align*}
We now apply the trapezoidal rule to the first integral.  For each $x$ and $t_n$, we let $\tau(x,t_n)$ denote the quadrature error incurred, i.e.,
\begin{align}
\int_{y \in \mathbb{R}} G(x,y) \hat{p}(y,t_n) \, dy &= k \sum_{j=-\infty}^\infty G(x,y_j) \hat{p}(y_j,t_n) + \tau(x,t_n) \nonumber \\
\label{eqn:taudef}
 &= \hat{p}(x,t_{n+1}) + \tau(x,t_n).
\end{align}
We use this in the previous equation to derive
$$
E(x,t_{n+1}) = \int_{y \in \mathbb{R}} G(x,y) E(y,t_n) \, dy + \tau(x,t_n).
$$
Taking absolute values, we apply the triangle inequality together with $G \geq 0$ to obtain
$$
|E(x,t_{n+1})| \leq \int_{y \in \mathbb{R}} G(x,y) |E(y,t_n)| \, dy + |\tau(x,t_n)|.
$$
Integrating over $x$ and using Fubini's theorem and (\ref{eqn:Gint}), 
\begin{equation}
\label{error1}
\|E(\cdot,t_{n+1})\|_1 - \|E(\cdot,t_n)\|_1 \leq \| \tau(\cdot,t_n) \|_1.
\end{equation}
Summing both sides from $n=1$ to $n=N-1$ and using (\ref{eqn:nonsingic}),
\begin{equation}
\label{eqn:e1norm}
\|E(\cdot,T)\|_1 \leq \sum_{n=1}^{N-1} \| \tau(\cdot,t_n) \|_1.
\end{equation}
We apply Lemma \ref{lem:membership} and Theorem \ref{thm:trap} to produce
\begin{equation}
\label{eqn:tauupper}
| \tau(x,t_n) | \leq \frac{\mathcal{N}}{2 \sinh(\pi a/k)} \exp(- \pi a/k),
\end{equation}
where $\tau$ and $\mathcal{N}$ are defined by (\ref{eqn:taudef}) and (\ref{eqn:condtn1}), respectively.  Combining (\ref{eqn:condtn1begin}) with (\ref{eqn:absgintermed}), we have
\begin{multline*}
\int_{y \in \mathbb{R}} | G(x,y + \irm a) \hat{p}(y + \irm a, t_{n+1} ) | \, dy \\
\leq \exp \left( \frac{a^2}{2 M_2^2 h} \right) 
 k \sum_{j=-\infty}^\infty \hat{p}(z_j, t_n) \\
\times \int_{y \in \mathbb{R}} \frac{G(y,z_j)}{\sqrt{2 \pi h |g(y + \irm a)|^2}} \\ \times \exp \left( -\frac{A_2 x^2 + A_1 x + A_0}{4 |g(y+\irm a)|^4 h} \right) \, dy,
\end{multline*}
where again $A_2$, $A_1$, and $A_0$ are defined by
(\ref{eqn:acoeffs}).  We see that the right-hand side of this
inequality is invariant under $a \mapsto -a$, and so we write
\begin{multline*}
\mathcal{N} 
\leq 2 \exp \left( \frac{a^2}{2 M_2^2 h} \right)  k \sum_{j=-\infty}^\infty \hat{p}(z_j, t_n) \\
\times \int_{y \in \mathbb{R}} \frac{G(y,z_j)}{\sqrt{2 \pi h |g(y + \irm a)|^2}} \\ \times \exp \left( -\frac{A_2 x^2 + A_1 x + A_0}{4 |g(y+\irm a)|^4 h} \right) \, dy.
\end{multline*}
For $a < \min\{d, M_2^2/(2 M_3 M_4)\}$, we have shown that the coefficient $A_2$ is positive on $S_a$.  This enables us to integrate both sides with respect to $x$:
\begin{multline*}
\int_{x \in \mathbb{R}} \mathcal{N} \, dx \leq 2 \sqrt{2} \exp \left( \frac{a^2}{2 M_2^2 h} \right) k \sum_{j=-\infty}^\infty \hat{p}(z_j, t_n) \\
\times \int_{y \in \mathbb{R}}
\frac{ G(y,z_j) |g(y + \irm a)| }{ \sqrt{ g^2(y + \irm a) + g^2(y - \irm a) } } \\
\times \exp \left( \frac{(2 a + \irm h (f(y - \irm a) - f(y + \irm a)) )^2}{4 h (g^2(y + \irm a) + g^2(y - \irm a))} \right) \, dy.
\end{multline*}
On the right-hand side, we have carried out the $x$ integral first;
the changing of the order of summation and integration is justified by
the nonnegativity of every term.  Next, we apply estimates established
via Lemma \ref{lem:Gbound}.  We obtain
\begin{multline*}
\int_{x \in \mathbb{R}} \mathcal{N} \, dx \leq 2 \sqrt{2} \exp \left( \frac{a^2}{2 M_2^2 h} \right) \frac{M_3}{\gamma_0^{1/2} M_2} \\
\times \exp \left( \frac{a^2 (1 + h M_1)^2}{ h \gamma_0 M_2^2} \right) k \sum_{j=-\infty}^\infty \hat{p}(z_j,t_n) 
\end{multline*}
Combining this with (\ref{eqn:tauupper}) and the estimates from Lemma \ref{lem:discgauss}, we have
\begin{multline*}
\int_{x \in \mathbb{R}} |\tau(x,t_n)| \, dx \leq 
4 \sqrt{2} \exp \left( \frac{a^2}{2 M_2^2 h} \right) \frac{M_3}{\gamma_0^{1/2} M_2} \\ \times \exp \left( \frac{a^2 (1 + h M_1)^2}{ h \gamma_0 M_2^2} \right) \exp(-2 \pi a/k) k \| \hat{p}(\cdot, t_n) \|_{\ell^1}.
\end{multline*}
Using (\ref{eqn:ell1estimate}), we obtain
\begin{multline*}
\| \tau(\cdot, t_n) \|_1 \leq 
4 \sqrt{2} M_3 \gamma_0^{-1/2} M_2^{-1} \\
\times \exp \left( \frac{a^2}{2 M_2^2 h} \right) \exp \left( \frac{a^2 (1 + h M_1)^2}{ h \gamma_0 M_2^2} \right) \exp(-2 \pi a/k) \\
\times k \| \hat{p}(\cdot,t_1) \|_{\ell^1} (1 + 4 \exp(-2 \pi^2 M_2^2 h/k^2))^{n-1}.
\end{multline*}
We sum both sides from $n=1$ to $n=N-1$:
\begin{multline}
\label{eqn:tauestimate}
\sum_{n=1}^{N-1} \| \tau(\cdot, t_n) \|_1 \leq
\sqrt{2} M_3 \gamma_0^{-1/2} M_2^{-1} \\
\times \exp \left( \frac{a^2}{2 M_2^2 h} \right) \exp \left( \frac{a^2 (1 + h M_1)^2}{ h \gamma_0 M_2^2} \right) \\
\times h^{-1} \exp(-2 \pi a/k) k \| \hat{p}(\cdot,t_1) \|_{\ell^1} \\
\times \biggl[ h \sum_{n=1}^{N-1} (1 + 4 \exp(-2 \pi^2 M_2^2 h/k^2))^{n-1} \biggr].
\end{multline}
We now use (\ref{eqn:e1norm}) and hypotheses (\ref{eqn:assumption2}) and
(\ref{eqn:assumption3}):
\begin{multline}
\label{eqn:tauestimate2}
\| E(\cdot,T) \|_1 \leq
\sqrt{2} M_3 \gamma_0^{-1/2} M_2^{-1} \\
\times \exp \left( \frac{r_2^2}{2 M_2^2} \right) \exp \left( \frac{r_2^2 (1 +
    h M_1)^2}{ \gamma_0 M_2^2} \right) T \\
\times h^{-1} \exp(-2 \pi r_2 r_1^{-1} h^{1/2 - \rho}) k \|
\hat{p}(\cdot,t_1) \|_{\ell^1} \\
\times \biggl[ \frac{h}{T} \sum_{n=1}^{N-1} (1 + 4
  \exp(-2 \pi^2 M_2^2 r_1^{-2} h^{1 - 2 \rho} ))^{n-1} \biggr].
\end{multline}
By (\ref{eqn:assumption4}), we have $h \leq T$.  By the definition of
$\gamma_0$ in Lemma \ref{lem:Gbound}, we have that $\gamma_0 =
2(1-b)$ where
$$
b = 2M_3 M_4 a/M_2^2 = 2 M_3 M_4 r_2 h^{1/2} / M_2^2.
$$
Assumption (\ref{eqn:assumption4}) now implies that $b \leq 1/2$ and 
$\gamma_0^{-1} \leq 1$.  We write
$$
c_\star = \sqrt{2} M_3 M_2^{-1}
\exp \left( \frac{r_2^2}{2 M_2^2} \right) \exp \left( \frac{r_2^2 (1 +
    T M_1)^2}{ M_2^2} \right) T.
$$
Let $\xi(h) = 4 \exp( -c_1 h^{-c_2} )$,
where $c_1$ and $c_2$ are positive constants with no dependence on
$h$. We check that $\xi$ satisfies the hypotheses of Lemma \ref{lem:o1t}; $h^{-\gamma} \xi(h)$ has a global maximum at
$h_\ast = (c_1 c_2 / \gamma)^{1/c_2}$, and so we have $\xi(h) \leq
\epsilon h^\gamma$ for $\epsilon = h_\ast^{-\gamma} \xi(h_\ast)$, any
choice of $\gamma > 1$, and all $h > 0$.   With $c_1 = 2 \pi^2
\gamma^2$ and $c_2 = 2 \rho - 1$, we apply Lemma \ref{lem:o1t} to the
term in square brackets on the right-hand side of
(\ref{eqn:tauestimate2}).  We conclude that
$$
\frac{h}{T} \sum_{n=1}^{N-1} (1 + 4
  \exp(-2 \pi^2 M_2^2 r_1^{-2} h^{1 - 2 \rho} ))^{n-1} = 1 + o(h)
$$
as $h \to 0$ with $N = T/h$.  By Lemma \ref{lem:discgauss},
$k \| \hat{p}(\cdot,t_1) \|_{\ell^1} = 1 + o(k)$
as $k \to 0$.  Putting everything together, we are left with
(\ref{eqn:finerr}).   \end{proof}

We are now in a position to combine our result with an earlier result
from the literature to establish the convergence
of $\hat{p}$ to $p$.

\begin{corollary}
\label{cor:phatp}
In addition to the hypotheses of Theorem \ref{thm:convergence},
suppose there exist constants $\overline{f}_k, \overline{g}_k > 0$ such that
$\sup_{x \in \mathbb{R}} |f^{(k)}(x)| \leq \overline{f}_k$ and 
$\sup_{x \in \mathbb{R}} |g^{(k)}(x)| \leq \overline{g}_k$
for all $k \geq 0$.  Note that for $k=1$, the first condition is redundant with
(\ref{eqn:lipschitz}); for $k=0$ and $k=1$, the second condition is
redundant with (\ref{eqn:gbound}) and (\ref{eqn:glip}).  Then we have
$$
\| p(\cdot,T) - \hat{p}(\cdot,T) \|_1 = O(h)
$$
\end{corollary}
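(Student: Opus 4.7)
The plan is to prove Corollary \ref{cor:phatp} by a simple triangle inequality split:
\begin{equation*}
\| p(\cdot,T) - \hat{p}(\cdot,T) \|_1 \leq \| p(\cdot,T) - \tilde{p}(\cdot,T) \|_1 + \| \tilde{p}(\cdot,T) - \hat{p}(\cdot,T) \|_1.
\end{equation*}
Each of the two terms on the right-hand side will be treated by a separate result that has already been cited or proved in the paper, so the main task is to check that the hypotheses actually line up.

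For the first term, I would invoke the result of \citet{BallyTalay1996} stated in the introduction, namely that under appropriate smoothness conditions on $f$ and $g$, the Euler--Maruyama density $\tilde{p}(\cdot,T)$ satisfies $\| p(\cdot,T) - \tilde{p}(\cdot,T) \|_1 = O(h)$. The reason the corollary introduces the extra hypothesis that all derivatives $f^{(k)}$ and $g^{(k)}$ are bounded uniformly on $\mathbb{R}$ is precisely to meet the regularity assumptions of Bally--Talay; the remark in the statement about redundancy for $k=0,1$ is just reassuring the reader that we are strengthening, not contradicting, the admissibility conditions in Definition \ref{def:admissible}. The main thing to verify here is that the Bally--Talay hypotheses (smooth bounded coefficients, uniform ellipticity) are all implied by admissibility together with the new derivative bounds; uniform ellipticity in particular comes from the lower bound $|g(z)| \geq M_2$ in (\ref{eqn:gbound}).

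For the second term, I would apply Theorem \ref{thm:convergence} directly, which gives
\begin{equation*}
\| \tilde{p}(\cdot,T) - \hat{p}(\cdot,T) \|_1 \leq c_\star h^{-1} \exp(-2\pi r_2 r_1^{-1} h^{1/2 - \rho})(1+o(h)+o(k)).
\end{equation*}
Since $\rho > 1/2$, the exponent $1/2 - \rho$ is strictly negative, so the argument of the exponential tends to $-\infty$ as $h \to 0$. Consequently, $h^{-1} \exp(-2\pi r_2 r_1^{-1} h^{1/2 - \rho})$ decays faster than any positive power of $h$; in particular it is $o(h)$. Combining this with the $O(h)$ bound on the first term gives $\| p(\cdot,T) - \hat{p}(\cdot,T) \|_1 = O(h)$, as claimed.

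The only mildly delicate point, and the one I would flag as the main obstacle, is the bookkeeping around the hypotheses: the corollary tacitly assumes that all assumptions of Theorem \ref{thm:convergence} (admissibility, $k = r_1 h^\rho$ with $\rho > 1/2$, $a = r_2 h^{1/2}$, and the smallness condition (\ref{eqn:assumption4}) on $h$) remain in force, and one must check that adding bounded higher derivatives of $f$ and $g$ is compatible with, and does not weaken, these assumptions. Once that is confirmed, the proof is just the two-line triangle-inequality estimate followed by noting that exponential-in-$h^{1/2-\rho}$ dominates any polynomial rate.
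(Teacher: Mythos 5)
Your proposal is correct and follows essentially the same route as the paper's proof: a triangle-inequality split, the Bally--Talay estimate (with uniform ellipticity supplied by the lower bound in (\ref{eqn:gbound}) and smoothness by the bounded-derivative hypotheses) giving $O(h)$ for the first term, and Theorem \ref{thm:convergence} showing the second term is asymptotically negligible since $\rho > 1/2$. The only cosmetic difference is that the paper derives the $L^1$ bound on $\| p(\cdot,T) - \tilde{p}(\cdot,T) \|_1$ by integrating the pointwise Gaussian-decay estimate from Equations (27--28) of \citet{BallyTalay1996}, whereas you quote the $L^1$ consequence directly.
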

\begin{proof}
We have
\begin{equation}
\label{eqn:usetri}
\| p(\cdot,T) - \hat{p}(\cdot,T) \|_1 \leq 
\| p(\cdot,T) - \tilde{p}(\cdot,T) \|_1 + \| \tilde{p}(\cdot,T) -
\hat{p}(\cdot,T) \|_1
\end{equation}
To handle the first term, we appeal to
Corollary 2.1 from \citet{BallyTalay1996}.  Our lower bound on $g$ in
(\ref{eqn:gbound}) corresponds to Bally and Talay's uniform
ellipticity hypothesis ``H1''; we may then apply Equations (27-28) from
\citet{BallyTalay1996} to derive
$$
| p(x,T) - \tilde{p}(x,T) | \leq h \mathcal{H}_1 \exp
\left( - \mathcal{H}_2 x^2/T \right)
$$
for constants $\mathcal{H}_1, \mathcal{H}_2 > 0$ that do not depend on
$h$.  Therefore,
$$
\| p(\cdot,T) - \tilde{p}(\cdot,T) \|_1 \leq h \mathcal{H}_1 \left(
  \frac{\pi T}{\mathcal{H}_2} \right)^{1/2} = O(h).
$$
Returning to (\ref{eqn:usetri}), by Theorem \ref{thm:convergence}, 
the second term on the right-hand side goes to zero much faster than $h$,
finishing the proof.
 \end{proof}

\section{Boundary Truncation}
\label{sect:trunc}
In practice, in lieu of the infinite sum (\ref{eqn:phatn1}), we compute approximate densities using the following truncation:
\begin{equation}
\label{eqn:pcircn1}
\mathring{p}(x,t_{n+1}) = k \sum_{j=-M}^M G(x,y_j) \mathring{p}(y_j,t_n)
\end{equation}
As in (\ref{eqn:nonsingic}), we take $\mathring{p}(x,t_1) = G(x,C)$
and use (\ref{eqn:pcircn1}) starting with $n=1$.  Let us denote the error due to truncation by
\begin{equation}
\label{eqn:rdef}
r(x,t_{n+1}) = \hat{p}(x,t_{n+1}) - \mathring{p}(x,t_{n+1})
\end{equation}
By (\ref{eqn:nonsingic}), we have $r(x,t_1) \equiv 0$.  For $n \geq
1$, we have
\begin{equation}
\label{eqn:rbreakdown}
r(x,t_{n+1}) = k \Biggl( \sum_{|j|>M} G(x,y_j) \hat{p}(y_j,t_n) 
 + \sum_{|j| \leq M} G(x,y_j) r(y_j,t_n) \Biggr).
\end{equation}
Based on the right-hand side, we see that it will be important to estimate the tail sum $\sum_{|j|>M} \hat{p}(x_j,t_n)$.  We accomplish this using a Chernoff bound.  To arrive at this bound, we construct a sequence of random variables $\{Q_n\}_{n \geq 1}$.  We first define a normalization constant at time $n$:
\begin{equation}
\label{eqn:kndef}
K_n = \| \hat{p}(\cdot,t_n) \|_{\ell^1} = \sum_i \hat{p}(x_i,t_n).
\end{equation}
By (\ref{eqn:ell1estimate}), we know that $K_n < \infty$ for $k > 0$
and $h > 0$.  Let
\begin{equation}
\label{eqn:qdef}
q(x_i,t_n) = \frac{\hat{p}(x_i,t_n)}{K_n},
\end{equation}
so that $\sum_i q(x_i,t_n) = 1$.  For each $n$, we postulate a random variable $Q_n$ with state space $\{k \mathbb{Z}\}$ and probability mass function $q(\cdot,t_n)$.   In order to apply a Chernoff bound to $Q_n$, we must estimate its moment generating function.
\begin{lemma}
\label{lem:chernoff}
Suppose $f$ and $g$ are admissible.  Suppose $k = h^\rho$ for some $\rho > 1/2$, and that $h, k > 0$ satisfy (\ref{eqn:newhk1}).  Then there exists $h_\ast$ such that for all $h \in [0, h_\ast)$, all $s \in \mathbb{R}$, and all $n$ satisfying $0 \leq n \leq (N-1)$,
$$
k E[e^{s Q_{n+1}}] < \frac{3}{2} \exp \left[ T \left( \frac{M_3^2 s^2}{2} + f(0) s \right) \right] \left( \frac{1}{2} + \exp (C s e^{M_1 T} ) \right)  < \infty.
$$
\end{lemma}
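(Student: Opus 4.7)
My plan is to unfold $k E[e^{s Q_{n+1}}]$ using the definition $q(x, t_n) = \hat p(x, t_n)/K_n$ and the DTQ recursion (\ref{eqn:phatn1}), then interchange the (absolutely convergent, by Lemma \ref{lem:l1est}) order of summation to obtain
\[
k E[e^{s Q_{n+1}}] = \frac{k}{K_{n+1}} \sum_j \hat p(y_j, t_n) \bigl[ k \sum_i e^{s x_i} G(x_i, y_j) \bigr].
\]
The inner bracket is a discrete Gaussian moment generating function. Completing the square shows $e^{s x} G(x, y_j) = \exp(s(y_j + f(y_j) h) + s^2 g^2(y_j) h / 2) \cdot \tilde G(x)$, where $\tilde G$ is a Gaussian density with the same variance $g^2(y_j) h$ as $G(\cdot, y_j)$, but with mean shifted by $s g^2(y_j) h$. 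The proof of Lemma \ref{lem:discgauss} depends only on the variance, so it applies to $\tilde G$ and yields
\[
k \sum_i e^{s x_i} G(x_i, y_j) \leq (1 + \epsilon_h) \exp\bigl( s(y_j + f(y_j) h) + s^2 g^2(y_j) h / 2 \bigr),
\]
where $\epsilon_h := 4 \exp(-2 \pi^2 M_2^2 h^{1 - 2 \rho})$ decays super-polynomially in $h$ because $\rho > 1/2$. Using $|g| \leq M_3$ for the Gaussian factor, and the Lipschitz bound $|f(y) - f(0)| \leq M_1 |y|$ together with a case split on the sign of $s y$, I obtain the pointwise estimate
\[
e^{s(y + f(y) h)} \leq e^{s f(0) h} \bigl[ e^{s (1 + M_1 h) y} + e^{s (1 - M_1 h) y} \bigr].
\]

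\noindent\textbf{Recursion.} Setting $M_n(s) := k E[e^{s Q_n}]$ and recognizing $\sum_j \hat p(y_j, t_n) e^{s' y_j}/K_n = E[e^{s' Q_n}]$, the preceding estimates yield the one-step recursion
\[
M_{n+1}(s) \leq \alpha_n e^{s f(0) h + s^2 M_3^2 h / 2} \bigl[ M_n(s(1 + M_1 h)) + M_n(s(1 - M_1 h)) \bigr],
\]
with $\alpha_n := (1 + \epsilon_h)(K_n / K_{n+1})$. Lemma \ref{lem:l1est} gives $K_n / K_{n+1} \leq 1/(1 - \epsilon_h)$, so the cumulative product $\prod_{m=1}^{n} \alpha_m \leq 3/2$ for all $n \leq N$, provided $h < h_\ast$ is chosen so small that $N \epsilon_h$ is negligible; this is possible because $N = T/h$ while $\epsilon_h$ decays faster than any polynomial in $h$.

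\noindent\textbf{Induction and main obstacle.} To close the argument, I prove by induction on $n$ that
\[
M_n(s) \leq \tfrac{3}{2} \exp\!\bigl[(n-1) h (s^2 M_3^2 / 2 + s f(0))\bigr] \bigl( \tfrac{1}{2} + \exp(s C (1 + M_1 h)^{n-1}) \bigr).
\]
The base case $n = 1$ follows from applying the inner-sum estimate above to $\hat p(\cdot, t_1) = G(\cdot, C)$, and setting $n = N$, $N h = T$, and $(1 + M_1 h)^{N-1} \leq e^{M_1 T}$ recovers the stated bound. The main obstacle is the doubling in the recursion, which naively produces $2^N$ branches. Closing the induction hinges on showing that the ``slow'' branch $M_n(s(1 - M_1 h))$ is absorbed without blowup: when $s C \geq 0$, the exponent $s C (1 - M_1 h)(1 + M_1 h)^{n-1}$ is bounded above by $s C (1 + M_1 h)^n$, so its exponential fits inside $\exp(s C (1 + M_1 h)^n)$; when $s C < 0$, both candidate exponentials are at most one, so the slow branch slots into the $\tfrac{1}{2}$ term of the hypothesis. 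Combined with $\prod \alpha_m \leq 3/2$, this preserves the $\tfrac{3}{2}$ prefactor through all $N$ iterations.
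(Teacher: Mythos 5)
Your first phase is sound and matches the paper's opening moves: completing the square so that the inner sum is a discrete Gaussian, bounding it by $1+\epsilon_h$ via Lemma \ref{lem:discgauss}, and controlling $K_n/K_{n+1}$ via Lemma \ref{lem:l1est} are exactly the paper's $\zeta_s(j)$ and $\zeta_1(h)$ estimates. The genuine gap is in the step you yourself flag as the main obstacle: the two-branch recursion does not close under your induction hypothesis. Write $c(s)=\tfrac12 M_3^2s^2+f(0)s$, $s_\pm=s(1\pm M_1h)$, $a_\pm=s_\pm C(1+M_1h)^{n-1}$, so $a_+=sC(1+M_1h)^n$. Inserting the hypothesis into both branches produces the bracket $1+e^{a_+}+e^{a_-}$, which must be compared with the target bracket $\tfrac12+e^{a_+}$. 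If $sC\le 0$ then $a_-\ge a_+$, so $1+e^{a_+}+e^{a_-}\ge 1+2e^{a_+}=2\bigl(\tfrac12+e^{a_+}\bigr)$: the two branches together are at least \emph{twice} the target bracket, and in the case $sC\ge 0$ your own bound $e^{a_-}\le e^{a_+}$ again yields only $1+2e^{a_+}\le 2\bigl(\tfrac12+e^{a_+}\bigr)$. So each induction step costs a factor of (essentially) $2$; the target's $\tfrac12$ can absorb one additive $\tfrac12$ once, not $N$ times, and the total cushion $\prod_m\alpha_m\le\tfrac32$ is $h$-uniformly bounded, so the argument only gives $M_{n+1}(s)\lesssim 2^{n}(\cdots)$, i.e.\ $2^{T/h}$, which destroys the claimed bound. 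There is also a secondary bookkeeping failure of the same kind: the branch prefactors are $e^{(n-1)hc(s_\pm)}$, not $e^{(n-1)hc(s)}$, and since $c(s_\pm)\neq c(s)$ this mismatch has nowhere to go in a fixed-form hypothesis (over $N$ steps it costs an $s$-dependent constant factor not present in your stated bound).

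This is precisely where the paper's proof diverges from yours. Instead of the Lipschitz bound with a sign split, it applies the mean value theorem, $f(y_j)=f(0)+\beta y_j$ with $\beta=f'(\xi)$, $|\beta|\le M_1$, so each step produces a \emph{single} rescaled argument $s(1+\beta h)$ rather than a sum of two exponentials; the rescaled argument is carried symbolically down to $n=1$, where $Q_1$ has an explicit discrete-Gaussian law, and only there is $(1+\beta h)^n\le(1+M_1h)^N\to e^{M_1T}$ invoked inside the explicit moment generating function. Because no additive branching ever occurs, no $2^N$ factor arises. Any repair of your route must likewise avoid adding the two branch bounds at every step (merging them into a single worst-case term still costs the same factor of $2$), so as written the proposal has a genuine gap at the induction step.
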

\begin{proof}
We begin with our estimate of the moment generating function of $Q_{n+1}$.  The calculation proceeds in two phases.  The first phase is exact; note that in what follows we use the notation $y_j = jk$, $w_j = y_j + f(y_j) h$, and $g^2 = g^2(y_j)$:
\begin{align}
&E[e^{s Q_{n+1}}] = \sum_{i=-\infty}^\infty e^{s x_i} q(x_i,t_{n+1}) \nonumber \\
 &= \frac{k}{K_{n+1}} \sum_i e^{s x_i} \sum_j \frac{1}{\sqrt{2 \pi g^2 h}} \exp \left( -\frac{(x_i - w_j)^2}{2 g^2 h} \right) \hat{p}(y_j,t_{n}) \nonumber \\
 &= \frac{k}{K_{n+1}} \sum_j \sum_i \frac{1}{\sqrt{2 \pi g^2 h}} \exp \left( -\frac{x_i^2 - 2 x_i w_j + w_j^2 - 2 g^2 h s x_i}{2 g^2 h} \right) \hat{p}(y_j,t_{n}) \nonumber \\
\label{eqn:mgf0}
 &= \frac{1}{K_{n+1}} \sum_j \zeta_s(j) \exp \left( -\frac{w_j^2 - (w_j + g^2 h s)^2}{2 g^2 h} \right) \hat{p}(y_j,t_{n}), 
\end{align}
where
$$
\zeta_s(j) = k \sum_i \frac{1}{\sqrt{2 \pi g^2 h}} \exp \left( -\frac{(x_i - (w_j + g^2 h s))^2}{2 g^2 h} \right).
$$
It is at this point that we begin to estimate.  Note that the summand is in fact a discrete Gaussian $\phi(x_i)$, as in (\ref{eqn:discgaussdef}), with $\mu = w_j + g^2(y_j) h s$ and $\sigma^2 = g^2(y_j) h$.  Hence we may apply the same reasoning from Lemma \ref{lem:discgauss} to obtain
\begin{equation}
\label{eqn:zetaestimate}
\zeta_s(j) \leq 1 + 4 \exp \left( \frac{-2 \pi^2 g^2(y_j) h}{k^2} \right) 
\leq 1 + 4 \exp \left( \frac{-2 \pi^2 M_2^2 h}{k^2} \right).
\end{equation}
Next, we turn our attention to the remaining exponential in (\ref{eqn:mgf0}).  We use (\ref{eqn:gbound}), the mean value theorem, and the definition of $w_j$ to obtain:
\begin{align}
\exp \left( -\frac{w_j^2 - (w_j + g^2 h s)^2}{2 g^2 h} \right) &= \exp \left( w_j s + \frac{1}{2} g^2(y_j) h s^2 \right) \nonumber \\
 &\leq e^{ M_3^2 h s^2/2 } \exp (y_j s + f(y_j) h s) \nonumber \\
 &\leq e^{ M_3^2 h s^2/2 } \exp (y_j s + f(0) h s + \beta y_j h s) \nonumber \\
\label{eqn:whatwecalledtau}
 &\leq e^{ M_3^2 h s^2/2 + f(0) h s } \exp ( y_j s (1 + \beta h) )
\end{align}
Here $\beta = f'(y)$ for some $y \in (0,y_j)$.  Now we combine (\ref{eqn:mgf0}), (\ref{eqn:zetaestimate}), and (\ref{eqn:whatwecalledtau}).  The result is
\begin{multline}
\label{eqn:mgf1}
E[e^{s Q_{n+1}}] \leq \frac{K_{n}}{K_{n+1}} \left( 1 + 4 \exp ( -2 \pi^2 M_2^2 h/k^2 ) \right)  e^{ M_3^2 h s^2/2 + f(0) h s } \\ \times \frac{1}{K_{n}} \sum_j \exp ( y_j s (1 + \beta h) ) \hat{p}(y_j,t_{n})
\end{multline}
We recognize the expression on the second line as the moment generating function of $Q_{n}$ evaluated at $s' = s(1 + \beta h)$.  Therefore,
\begin{align*}
k E[e^{s Q_{n+1}}] &\leq \frac{K_{n}}{K_{n+1}} \left( 1 + 4 \exp ( -2 \pi^2 M_2^2 h/k^2 ) \right)  \\
&\qquad \times e^{ M_3^2 h s^2/2 + f(0) h s } k E[e^{s (1 + \beta h) Q_{n}}] \\
 &\leq \underbrace{\frac{K_1}{K_{n+1}} \left( 1 + 4 \exp ( -2 \pi^2 M_2^2 h/k^2 ) \right)^n}_{\zeta_1(h)} \\
&\qquad \times e^{T(M_3^2 s^2/2 + f(0) s)} \underbrace{k E[e^{s (1 + \beta h)^n Q_1} ]}_{\zeta_2(h)}.
\end{align*}
The main question now is what happens as $h \to 0$ and $N \to \infty$ such that $h N = T$.  We assume that $0 \leq n \leq (N-1)$.  Because $k = r_1 h^\rho$ for $\rho > 1/2$, we know by Lemma \ref{lem:l1est} that $\zeta_1(h) \to 1$ as $h \to 0$.  Hence there exists $h_\ast^1$ such that $h \in [0,h_\ast^1)$ ensures that $|\zeta_1(h) - 1| < 1/2$, i.e., $\zeta_1(h) < 3/2$.  Next, consider 
\begin{align*}
\zeta_2(h) &= k E[e^{s (1 + \beta h)^n Q_1}] \\
 &= k \sum_{i=-\infty}^\infty e^{s (1 + \beta h)^n x_i} \hat{p}(x_i,t_1) \\
 &= k \sum_{i=-\infty}^\infty e^{s (1 + \beta h)^n x_i} G(x_i,C) \\
 &= \exp \Bigl( (C + f(C) h) s (1 + \beta h)^n +\frac{h g^2(C) s^2}{2} (1 + \beta h)^{2n} \Bigr) k \sum_{i=-\infty}^\infty \phi(x_i),
\end{align*}
where $\phi(x)$ is the Gaussian density defined in (\ref{eqn:discgaussdef}) with
\begin{align*}
\mu &= C + f(C) h + h g^2(C) s (1 + \beta h)^n \\
\sigma^2 &= h g^2(C)
\end{align*}
Now we apply Lemma \ref{lem:discgauss}, $n \leq (N-1)$, and (\ref{eqn:lipschitz}) to obtain
\begin{align*}
\zeta_2(h) &\leq \exp \Bigl( |C + f(C) h| s (1 + M_1 h)^N 
+ \frac{h g^2(C) s^2}{2} (1 + M_1 h)^{2N} \Bigr) \\
& \qquad \times (1 + 4 \exp (- 2 \pi^2 g^2(C) h / k^2)).
\end{align*}
As before, $h k^{-2} = r_1^{-2} h^{1 - 2 \rho} \to +\infty$ as $h \to 0$, and the term on the third line goes to $1$ as $h \to 0$.  Since $\lim_{h \to 0^+} (1 + M_1 h)^N = e^{M_1 T}$, we have
$$
\lim_{h \to 0^+} \zeta_2(h) \leq \exp \left( C s e^{M_1 T} \right).
$$
Thus there exists $h^2_\ast$ such that $h \in [0,h^2_*)$ implies
$$
\left| \zeta_2(h) - \exp \left( C s e^{M_1 T} \right) \right| \leq \frac{1}{2}.
$$
Taking $h_\ast = \min \{ h^1_\ast, h^2_\ast \}$ finishes the proof.
 \end{proof}

We can now give conditions under which $r$, defined in
(\ref{eqn:rdef}), converges to zero.
\begin{lemma}
\label{lem:rl1}
Suppose $f$ and $g$ are admissible in the sense of Definition
\ref{def:admissible}.  Suppose $k = h^\rho$ for $\rho > 1/2$,
and that $h, k > 0$ satisfy (\ref{eqn:newhk1}).
For $\varepsilon \geq 1$, let
\begin{equation}
\label{eqn:ourM}
M = \lceil (\varepsilon + \rho + 1) (-\log h) / k \rceil.
\end{equation}
Let $h_\ast$ be as in Lemma \ref{lem:chernoff}.
Then for $h < h_\ast$, $\displaystyle k \sum_{ |i| \leq M } |r(x_i,T)| = O(h)$.
\end{lemma}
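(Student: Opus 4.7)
The plan is to set up and iterate a recursive inequality for $R_n := k \sum_{|i| \leq M} |r(x_i,t_n)|$, using the Chernoff bound of Lemma \ref{lem:chernoff} to control the discarded tail of $\hat{p}$ that appears in (\ref{eqn:rbreakdown}), and Lemma \ref{lem:o1t} to pass from a per-step estimate to an estimate at time $T$.

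First I would take absolute values in (\ref{eqn:rbreakdown}) (legal because $G \geq 0$), evaluate at $x = x_i$, multiply by $k$, sum over $|i| \leq M$, and interchange the order of the resulting double sum. Each inner sum $k \sum_{|i| \leq M} G(x_i, y_j)$ is bounded above by the full discrete Gaussian mass $k \sum_{i \in \mathbb{Z}} G(x_i, y_j) \leq 1 + \xi(h)$, where $\xi(h) = 4 \exp(-2\pi^2 M_2^2 h/k^2)$, by Lemma \ref{lem:discgauss}. This yields the recursion
$$
R_{n+1} \leq (1 + \xi(h)) \bigl( R_n + T_n \bigr), \qquad T_n := k \sum_{|j| > M} \hat{p}(y_j, t_n),
$$
with initial condition $R_1 = 0$ since $r(x,t_1) \equiv 0$.

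Next I would bound $T_n$ through a Chernoff estimate on the random variable $Q_n$ of Lemma \ref{lem:chernoff}. Taking $s > 0$, the positive tail satisfies
$$
k \sum_{i > M} \hat{p}(x_i,t_n) \leq e^{-sMk} \cdot k \sum_{i \in \mathbb{Z}} e^{s x_i} \hat{p}(x_i,t_n) = e^{-sMk} \cdot K_n \cdot k\, E[e^{s Q_n}],
$$
where $K_n = \|\hat{p}(\cdot,t_n)\|_{\ell^1}$. By Lemma \ref{lem:chernoff}, $k E[e^{s Q_n}]$ is bounded by a constant $C(s)$ uniformly in $h < h_\ast$ and in $n$, while Lemma \ref{lem:l1normalization} gives $K_n = O(k^{-1}) = O(h^{-\rho})$. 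A symmetric estimate with $s < 0$ controls the negative tail. Setting $s = \pm 1$ and using $Mk \geq (\varepsilon + \rho + 1)(-\log h)$ from (\ref{eqn:ourM}), I obtain $T_n \leq C h^{-\rho} \cdot h^{\varepsilon + \rho + 1} = C h^{\varepsilon + 1}$.

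Unrolling the recursion then gives $R_N \leq C h^{\varepsilon+1} \sum_{m=1}^{N-1} (1 + \xi(h))^m$. Since $\rho > 1/2$, $\xi(h) = 4 \exp(-2\pi^2 M_2^2 h^{1 - 2\rho})$ has the form to which Lemma \ref{lem:o1t} applies, so the sum equals $O(1/h)$, and consequently $R_N = O(h^{\varepsilon}) = O(h)$ for any $\varepsilon \geq 1$. The main obstacle is the normalization bookkeeping: Lemma \ref{lem:chernoff} bounds $k E[e^{s Q_n}]$ rather than $E[e^{s Q_n}]$, and $K_n$ itself grows like $h^{-\rho}$, so an apparent $h^{-\rho}$ loss has to be absorbed by widening the truncation window—which is precisely why the formula (\ref{eqn:ourM}) carries an extra $+\rho$ beyond the $+1$ needed to trade one factor of $h$ against the $O(1/h)$ time steps.
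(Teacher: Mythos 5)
Your proposal is correct and follows essentially the same route as the paper: the recursion from (\ref{eqn:rbreakdown}) with the discrete-Gaussian mass bound of Lemma \ref{lem:discgauss}, the Chernoff/moment-generating-function bound of Lemma \ref{lem:chernoff} with $s=\pm 1$ for the discarded tail, and the $(1+\xi(h))^{T/h}$ accounting over $N = T/h$ steps giving $h^{\varepsilon+\rho+1}\,k^{-1}\,h^{-1} = h^{\varepsilon} = O(h)$. The only differences are cosmetic bookkeeping (you carry the factor $k$ inside $R_n$ and $T_n$ throughout, and cite Lemma \ref{lem:o1t} for the $O(1/h)$ geometric sum where the paper bounds it directly via $n \leq T/h$), and your closing remark correctly identifies why (\ref{eqn:ourM}) carries the extra $+\rho$.
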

\begin{proof}
We start with
$$
|r(x_i,t_{n+1})| \leq k \sum_{|j|>M} G(x_i,y_j) \hat{p}(y_j,t_n) 
 + k \sum_{|j| \leq M} G(x_i,y_j) |r(y_j,t_n)|.
 $$
Summing over $i$, we obtain
\begin{multline*}
\sum_{|i| \leq M} |r(x_i,t_{n+1})| \leq k \sum_{|j|>M} \sum_{|i| \leq M} G(x_i,y_j) \hat{p}(y_j,t_n) \\
 + k \sum_{|j| \leq M} \sum_{|i| \leq M} G(x_i,y_j) |r(y_j,t_n)|.
\end{multline*}
Using (\ref{eqn:discgaussl1}) together with (\ref{eqn:gbound}), we have
\begin{multline}
\label{eqn:ourform}
\sum_{|i| \leq M} |r(x_i,t_{n+1})| \\
 \leq  ( 1 + 4 \exp (-2 \pi^2 M_2^2 h / k^2) )\sum_{|j|>M} \hat{p}(y_j,t_n) \\ 
 + ( 1 + 4 \exp (-2 \pi^2 M_2^2 h / k^2) ) \sum_{|j| \leq M} |r(y_j,t_n)|.
\end{multline}
This is of the form
\begin{equation}
\label{eqn:genform}
R_{n+1} \leq \alpha \pi_n + \alpha R_n.
\end{equation}
We derive from this the sequence of inequalities $\alpha R_n \leq
\alpha^2 \pi_{n-1} + \alpha^2 R_{n-1}$, $\cdots$, $\alpha^{n-1} R_2 \leq
\alpha^{n} \pi_1 + \alpha^{n} R_1$.  Summing these together with (\ref{eqn:genform}), we derive
$$
R_{n+1} \leq \sum_{i=1}^n \alpha^i \pi_{n - i + 1} + \alpha^n R_1.
$$
Applying this to (\ref{eqn:ourform}) and using $r(\cdot,t_1) \equiv
0$, we have
\begin{multline}
\label{eqn:restimate}
\sum_{|i| \leq M} |r(x_i,t_{n+1})| \\
\leq \sum_{i=1}^n (1 + 4 \exp(-2 \pi^2 M_2^2 h/k^2))^i \sum_{|j| > M} \hat{p}(y_j,t_{n-i+1}).
\end{multline}
Now we use (\ref{eqn:qdef}) and the Chernoff bound to derive:
\begin{align*}
\sum_{|j| > M} \hat{p}(y_j,t_{n-i+1}) &= K_{n-i+1} \sum_{|j| > M}
  q(y_j,t_{n-i+1}) \\
 &\leq K_{n-i+1} \left[ P(Q_{n-i+1} \geq y_M) + P(Q_{n-i+1} \leq -y_M)
  \right] \\
 &\leq K_{n-i+1} e^{-s y_M} \left( E[e^{s Q_{n-i+1}}] + E[e^{-s
   Q_{n-i+1}}] \right)
\end{align*} 
We apply Lemma \ref{lem:chernoff} to obtain
\begin{multline}
\label{eqn:coshterm}
k \sum_{|j| > M} \hat{p}(y_j,t_{n-i+1}) \leq \frac{3}{2} K_{n-i+1} e^{-s y_M} \\
 \times \exp \left[ T \left( \frac{M_3^2 s^2}{2}
   + |f(0) s| \right) \right] \left( 1 + 2 \cosh (C s e^{M_1 T} )
\right)
\end{multline}
Applying this result to (\ref{eqn:restimate}), we have
\begin{multline*}
k \sum_{|i| \leq M} |r(x_i,t_{n+1})| \leq  \frac{3}{2} e^{-s y_M} \\
 \times \exp \left[ T
  \left( \frac{M_3^2 s^2}{2} + |f(0) s| \right) \right] \left( 1 + 2 \cosh (C s
  e^{M_1 T} ) \right) \\
\times \sum_{i=1}^n (1 + 4 \exp(-2 \pi^2 M_2^2 h/k^2))^i K_{n-i+1}.
\end{multline*}
By (\ref{eqn:kndef}) and (\ref{eqn:ell1estimate}), we have
$$
K_{n-i+1} \leq \| \hat{p}(\cdot,t_1) \|_{\ell^1} (1 + 4 \exp(-2 \pi^2 M_2^2
h/k^2))^{n-i}
$$
Using this and $n \leq N = T/h$,
\begin{multline}
\label{eqn:lastineq}
k \sum_{|i| \leq M} |r(x_i,t_{n+1})| \leq  \frac{3}{2} e^{-s y_M} \\ 
\times \exp \left[ T
  \left( \frac{M_3^2 s^2}{2} + |f(0) s| \right) \right] \left( 1 + 2 \cosh (C s
  e^{M_1 T} ) \right) \\ \times  \| \hat{p}(\cdot,t_1) \|_{\ell^1} \frac{T}{h} (1 + 4 \exp(-2 \pi^2 M_2^2 h/k^2))^{T/h}.
\end{multline}
Let $s = 1$.  Note that
$$
\lim_{h \to 0} (1 + 4 \exp(-2 \pi^2 M_2^2 h/k^2))^{T/h} = 1
$$
and $\lim_{k \to 0} k \| \hat{p}(\cdot,t_1) \|_{\ell^1} = 1$.
Thanks to (\ref{eqn:ourM}), we know that $y_M \geq (\varepsilon + \rho
+ 1) (-\log h)$.  We have shown the right-hand side of
(\ref{eqn:lastineq}) behaves like
$$h^{\varepsilon + \rho + 1} k^{-1} h^{-1} = h^{\varepsilon} = O(h),$$
as desired.  \end{proof}

So long as $M$ remains a positive integer, we can add a
constant to (\ref{eqn:ourM}) and still prove Lemma \ref{lem:rl1}.
What is important is how $M$ scales as a function of $h$; the
logarithmic rate given in (\ref{eqn:ourM}) is the rate at
which we have to push $M$ to $+\infty$ so that we obtain $O(h)$
convergence.  If we push $M$ to $+\infty$ at a faster rate, e.g., by
replacing $(-\log h)$ with $h^{-1}$, then $r$ will converge at a rate
that is exponential in $h$.

Thus far we have considered convergence of $r$ in a truncated and
scaled version of the $\ell^1$ norm.  Convergence in $L^1$ is an easy consequence.
\begin{lemma}
\label{lem:rL1}
Suppose $f$ and $g$ are admissible in the sense of Definition
\ref{def:admissible}.  Suppose $k = h^\rho$ for $\rho > 1/2$,
and that $h, k > 0$ satisfy (\ref{eqn:newhk1}).
For $\varepsilon \geq 1$, let $M$ be defined as in (\ref{eqn:ourM}).
Let $h_\ast$ be defined as in Lemma \ref{lem:chernoff}.
Then for $h < h_\ast$, we have $\left\| r(\cdot,T) \right\|_1 = O(h)$.
\end{lemma}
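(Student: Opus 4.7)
The plan is to reduce the $L^1$ estimate for $r(\cdot,T)$ to the already-proved truncated and scaled $\ell^1$ estimate from Lemma \ref{lem:rl1}, plus a tail sum that is handled directly by the Chernoff bound from Lemma \ref{lem:chernoff}. The key identity is the one-step recursion \eqref{eqn:rbreakdown} for $r(x, t_{n+1})$; applying it at $n+1 = N$, taking absolute values (using $G \geq 0$ and $\hat{p}\geq 0$ along with the triangle inequality), integrating in $x$, and invoking the normalization $\int_{\mathbb{R}} G(x,y)\,dx = 1$ from \eqref{eqn:Gint}, I obtain
\begin{equation*}
\|r(\cdot,T)\|_1 \;\leq\; k\sum_{|j|>M} \hat{p}(y_j, t_{N-1}) \;+\; k \sum_{|j|\leq M} |r(y_j, t_{N-1})|.
\end{equation*}
So the problem reduces to bounding two sums at the penultimate time step.

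For the second sum, I would appeal to the proof of Lemma \ref{lem:rl1}. Although that lemma is stated at time $T$, its proof produces a bound (see \eqref{eqn:lastineq}) valid uniformly for any $n+1 \leq N$; in particular it gives $k\sum_{|j|\leq M}|r(y_j,t_{N-1})| = O(h^{\varepsilon})$, which is $O(h)$ when $\varepsilon \geq 1$.

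For the first sum, I recycle the Chernoff argument used inside the proof of Lemma \ref{lem:rl1}: with $s=1$, the bound \eqref{eqn:coshterm} (combined with Lemmas \ref{lem:chernoff}, \ref{lem:l1est}, and \ref{lem:discgauss}, which give $K_{N-1} = O(h^{-\rho})$) yields
\begin{equation*}
k \sum_{|j|>M}\hat{p}(y_j, t_{N-1}) \;\leq\; (\text{const}) \cdot K_{N-1}\, e^{-y_M} \;=\; O\!\left(h^{-\rho} \cdot h^{\varepsilon+\rho+1}\right) \;=\; O(h^{\varepsilon+1}),
\end{equation*}
since $y_M \geq (\varepsilon + \rho + 1)(-\log h)$ by \eqref{eqn:ourM}. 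For $\varepsilon \geq 1$, this is $O(h^2)$, which is dominated by the second sum's $O(h)$ contribution.

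Adding the two pieces then gives $\|r(\cdot,T)\|_1 = O(h)$, finishing the proof. There is no genuine obstacle here: the integration step converts an $L^1$ norm into two $k$-scaled $\ell^1$ sums precisely because $G(x,y_j)$ is a probability density in $x$, and both resulting sums have already been bounded in the course of proving Lemma \ref{lem:rl1}. The only subtlety is noting that Lemma \ref{lem:rl1}'s estimate is in fact uniform over time steps, not just valid at $T$.
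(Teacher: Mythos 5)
Your proposal is correct and follows essentially the same route as the paper's proof: the same decomposition of $\|r(\cdot,T)\|_1$ into the truncated $\ell^1$ sum (handled by Lemma \ref{lem:rl1}, whose bound indeed holds at $t_{N-1}$) and the tail sum bounded via the Chernoff estimate (\ref{eqn:coshterm}) with $kK_{N-1}\to 1$, yielding $O(h^{\varepsilon+1})$. Your explicit remark about the uniformity of Lemma \ref{lem:rl1}'s bound over time steps is a detail the paper leaves implicit, but the argument is the same.
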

\begin{proof}
Note that
$$
|r(x,T)| \leq k \sum_{|j|>M} G(x,y_j) \hat{p}(y_j,t_{N-1}) 
 + k \sum_{|j| \leq M} G(x,y_j) |r(y_j,t_{N-1})|.
$$
This is similar to what we wrote above, except that the discrete variable $x_i$
has been replaced by the continuous variable $x$.  We now integrate both sides
with respect to $x$ to obtain
$$
\| r(\cdot,T) \|_1 \leq k \sum_{|j|>M} \hat{p}(y_j,t_{N-1}) + k \sum_{|j| \leq
  M} |r(y_j,t_{N-1})|.
$$
The second term is $O(h)$ by Lemma \ref{lem:rl1}.  For the
first term, we use (\ref{eqn:coshterm}) to write
\begin{multline}
\label{eqn:firstterm}
k \sum_{|j| > M} \hat{p}(y_j,t_{N-1}) \leq \frac{3}{2} K_{N-1} e^{-y_M} \\ \times \exp \left[ T \left( \frac{M_3^2}{2} + |f(0)| \right) \right] \left( 1 + 2 \cosh (C e^{M_1 T} )
\right).
\end{multline}
Since $\lim_{k \to 0} k K_{N-1} = 1$ and $e^{-y_M} =
O(h^{\varepsilon+\rho+1})$, the right-hand side of
(\ref{eqn:firstterm}) behaves like $h^{\varepsilon+1} =
O(h^2)$.  \end{proof}

It is now immediately clear that, under certain conditions, we have
established $O(h)$ convergence of $\mathring{p}$ to the true
density $p$ in the $L^1$ norm.
\begin{corollary}
\label{cor:everything}
Suppose that all hypotheses of both Corollary \ref{cor:phatp} and
Lemma \ref{lem:rL1} are satisfied.  Then, combining these results, we have
$\| p(\cdot,T) - \mathring{p}(\cdot,T) \|_1 = O(h)$.
\end{corollary}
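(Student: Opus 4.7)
The plan is to apply the triangle inequality to split the error $\| p(\cdot,T) - \mathring{p}(\cdot,T) \|_1$ into two pieces that are already controlled: the error between the true SDE density $p$ and the idealized (untruncated) DTQ density $\hat{p}$, and the truncation error $r(\cdot,T) = \hat{p}(\cdot,T) - \mathring{p}(\cdot,T)$. Concretely, I would write
\begin{equation*}
\| p(\cdot,T) - \mathring{p}(\cdot,T) \|_1 \leq \| p(\cdot,T) - \hat{p}(\cdot,T) \|_1 + \| \hat{p}(\cdot,T) - \mathring{p}(\cdot,T) \|_1.
\end{equation*}

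Under the stated hypotheses, Corollary \ref{cor:phatp} applies to the first term and yields $\| p(\cdot,T) - \hat{p}(\cdot,T) \|_1 = O(h)$. For the second term, I would invoke the definition of $r$ in \eqref{eqn:rdef} and Lemma \ref{lem:rL1}, which gives $\| r(\cdot,T) \|_1 = O(h)$. Summing the two $O(h)$ contributions and absorbing them into a single $O(h)$ bound completes the argument. There is no real obstacle here; the entire content lies in verifying that the hypotheses of both Corollary \ref{cor:phatp} and Lemma \ref{lem:rL1} are assumed simultaneously, which is explicit in the statement, and that the scaling $k = r_1 h^\rho$ with $\rho > 1/2$ used in each cited result is compatible (it is, since Lemma \ref{lem:rL1} is stated under the same spatial-temporal scaling regime). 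Thus the corollary follows at once from the triangle inequality combined with the two previously established convergence results.
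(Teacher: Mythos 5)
Your proposal is correct and matches the paper's argument exactly: the corollary is obtained by the triangle inequality, bounding $\| p(\cdot,T) - \hat{p}(\cdot,T) \|_1 = O(h)$ via Corollary \ref{cor:phatp} and $\| \hat{p}(\cdot,T) - \mathring{p}(\cdot,T) \|_1 = \| r(\cdot,T) \|_1 = O(h)$ via Lemma \ref{lem:rL1}. Nothing further is needed.
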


\section{Numerical Experiments}
\label{sect:numexp}
In this section, we use \texttt{R} and \texttt{C++} implementations of 
DTQ to study its empirical convergence behavior, and also to
compare against a numerical solver for (\ref{eqn:kolmo}), the Fokker-Planck or
Kolmogorov equation.  All codes described in this section, together
with instructions on how to reproduce Figures \ref{fig:convergence}
and \ref{fig:fpcompare} are available online\footnote{
\href{https://github.com/hbhat4000/sdeinference/tree/master/DTQpaper}{\parbox{3in}{https://github.com/hbhat4000/sdeinference/tree/ \\master/DTQpaper}}}.

\subsection{Convergence}
\label{sect:convergence}
First, we conduct an empirical study of DTQ convergence.  We verify that under the conditions given by Theorem \ref{thm:convergence}, we do observe convergence in practice.  We also show numerical evidence that such convergence takes place when one or more of the hypotheses do not hold.  Each SDE we consider is an equation for a scalar unknown $X_t$.

Let us describe the way in which we conduct numerical tests for each
SDE.  We begin with the initial condition $X_0 = 0$ and solve forward
in time until $T = 1$.  That is, we apply DTQ (\ref{eqn:pcircn1}) to compute
$\mathring{p}(x,1)$.  We use the following values of the temporal step
$h$:
\begin{equation}
\label{eqn:hvals}
\{0.5,0.2,0.1,0.05,0.02,0.01,0.005,0.002,0.001\}.
\end{equation}
For $h \geq 0.01$, we find that an implementation of DTQ
written completely in \texttt{R} is able to run in a reasonable amount of time.
For $h=0.005$ and below,
we use an implementation where computationally intensive parts of the
code are written in \texttt{C++}; this code is glued to our \texttt{R} code using the Rcpp and RcppArmadillo packages \citep{Rcpp2011,Rcpp2013,RcppArmadillo2014,Armadillo2016}.

The remaining algorithm parameters are set in the following way:
\begin{subequations}
\label{eqn:algoparams}
\begin{gather}
\label{eqn:kchoice}
k = h^{3/4} \\
\label{eqn:Mdef}
\begin{cases} \text{All except Ex. (\ref{eqn:ex3})} & M = \lceil \pi / k^2 \rceil \\
 \text{Ex. (\ref{eqn:ex3})} &  M = \lceil \pi/(2k) - 2 \rceil.\end{cases} \\
\label{eqn:xjchoice}
x_j = j k, \text{ for } - M \leq j \leq M.
\end{gather}
\end{subequations}
For each value of $h$, we compare $\mathring{p}(x,T)$ computed using 
DTQ against the exact solution $p(x,T)$.  Let $F(y,T) = \int_{x=-\infty}^{x=y} p(x,T) \, dx$ denote the cumulative distribution function associated with
the density $p$.  Each comparison is carried out using the following three norms:
\begin{subequations}
\label{eqn:allnorms}
\begin{align}
\label{eqn:tvnorm}
\| p(\cdot,T) - \mathring{p}(\cdot,T)\|_1 &\approx k \sum_{j=-M}^{j=M} | p(j
  k,T) - \mathring{p}(j k, T) | \\
\label{eqn:infnorm}
\| p(\cdot,T) - \mathring{p}(\cdot,T)\|_\infty &\approx \sup_{|j| \leq M} | p(j
  k,T) - \mathring{p}(j k, T) | \\
\label{eqn:ksnorm}
\| F(\cdot,T) - \mathring{F}(\cdot,T)\|_\infty &\approx \sup_{|j| \leq M} | F(j
  k,T) - \mathring{F}(j k, T) |
\end{align}
\end{subequations}
For our tests, we consider six SDE examples, all for a scalar unknown $X_t$:
\begin{subequations}
\label{eqn:allexamples}
\begin{align}
\label{eqn:ex1}
&\begin{cases} 
      dX_t = -X_t dt + dW_t \\
      p(x,t) = 
\displaystyle \frac{\exp(-x^2/(1 - \exp(-2 t)))}{\sqrt{(\pi (1 - \exp(-2 t)))}} 
\end{cases} \\
\label{eqn:ex2}
&\begin{cases} 
      dX_t = -\frac{1}{2} \tanh X_t \sech^2 X_t dt + \sech X_t dW_t \\
      p(x,t) = (2 \pi t)^{-1/2} (\cosh x) \exp(-\sinh^2 x/(2
      t)) \end{cases} \hspace{-5em} \\
\label{eqn:ex3}
&\begin{cases} 
      dX_t = -( \sin{X_t} \cos^3{X_t}) dt + (\cos^2{X_t}) dW_t \\
      p(x,t) = (2 \pi t)^{-1/2} (\sec^2 x) \exp(-\tan^2 x/(2 t))
\end{cases} \hspace{-5em} \\
\label{eqn:ex4}
&\begin{cases} 
      dX_t = \left( \frac{1}{2}X_t + \sqrt{1 + X_t^2} \right) dt + \sqrt{1 + X_t^2} dW_t \\
      p(x,t) = (2 \pi (1 + x^2))^{-1/2} \\
      \qquad \qquad \times \exp(-(\sinh^{-1} x - t)^2/2)
\end{cases} \hspace{-5em} \\
\label{eqn:ex5}
&\begin{cases} 
      dX_t = \frac{1}{2} X_t dt + \sqrt{1 + X_t^2} dW_t \\
      p(x,t) = (2 \pi t (1 + x^2))^{-1/2} \\
      \qquad \qquad \times \exp(-(\sinh^{-1} x)^2/(2 t))
\end{cases} \\
\label{eqn:ex6}
&\begin{cases} 
      dX_t = \left(-\sqrt{1+X_t^2} \sinh^{-1} X_t + \frac{1}{2} X_t
      \right) dt \\ \qquad \qquad + \sqrt{1 + X_t^2} dW_t \\
      p(x,t) = 
\displaystyle \frac{ \exp(-(\sinh^{-1} x)^2/(1 - \exp(-2 t))) }{\sqrt{(\pi (1 - \exp(-2 t)) (1 + x^2))}}
\end{cases} \hspace{-5em}
\end{align}
\end{subequations}
Note that for each example, we have supplied an exact solution
in the form of a probability density function $p(x,t)$.  For each example,
we compare the DTQ density with $p(x,T=1)$.

\begin{figure*}[tbh]
{\centering \includegraphics[width=2.45in,clip,trim=0 50 0
  0]{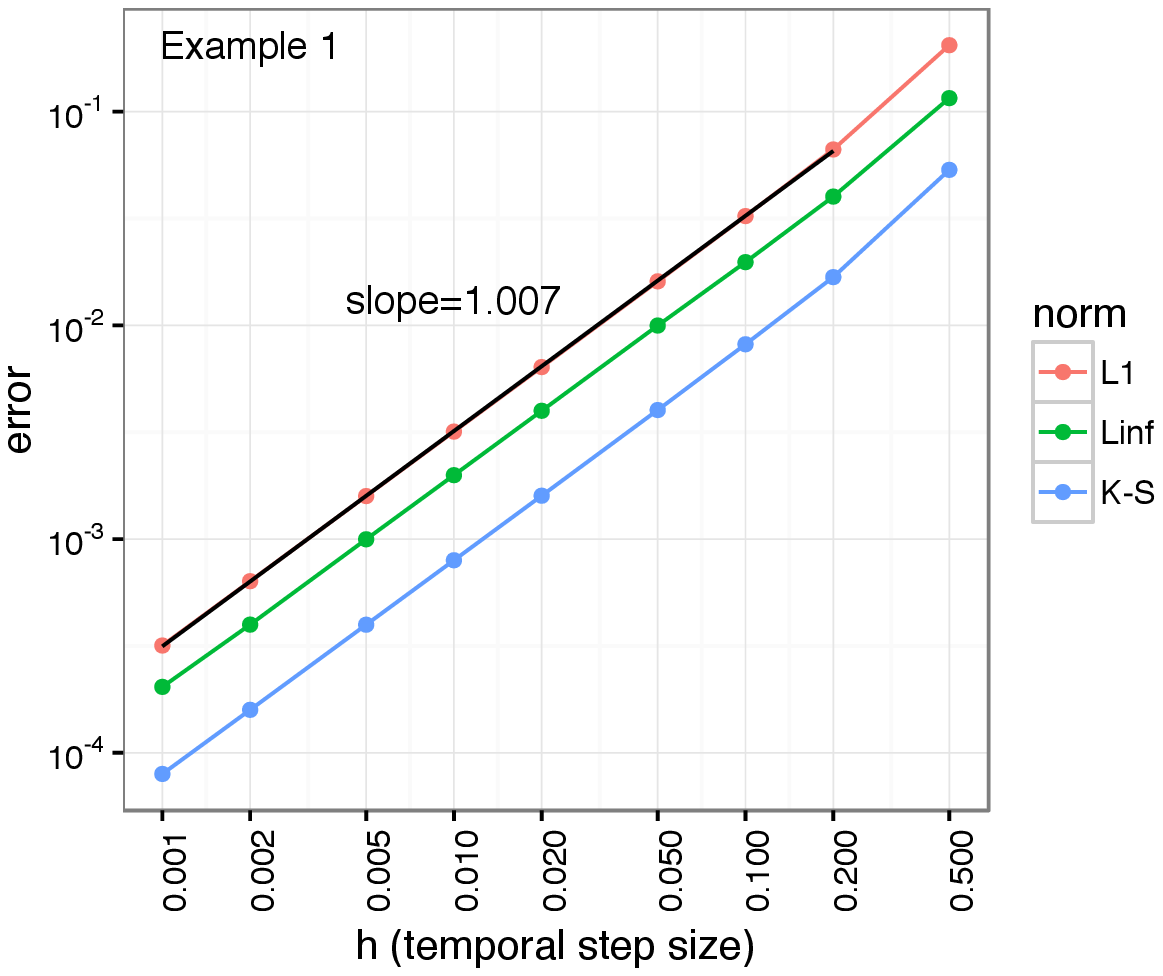} \includegraphics[width=2.45in,clip,trim=0 50 0 0]{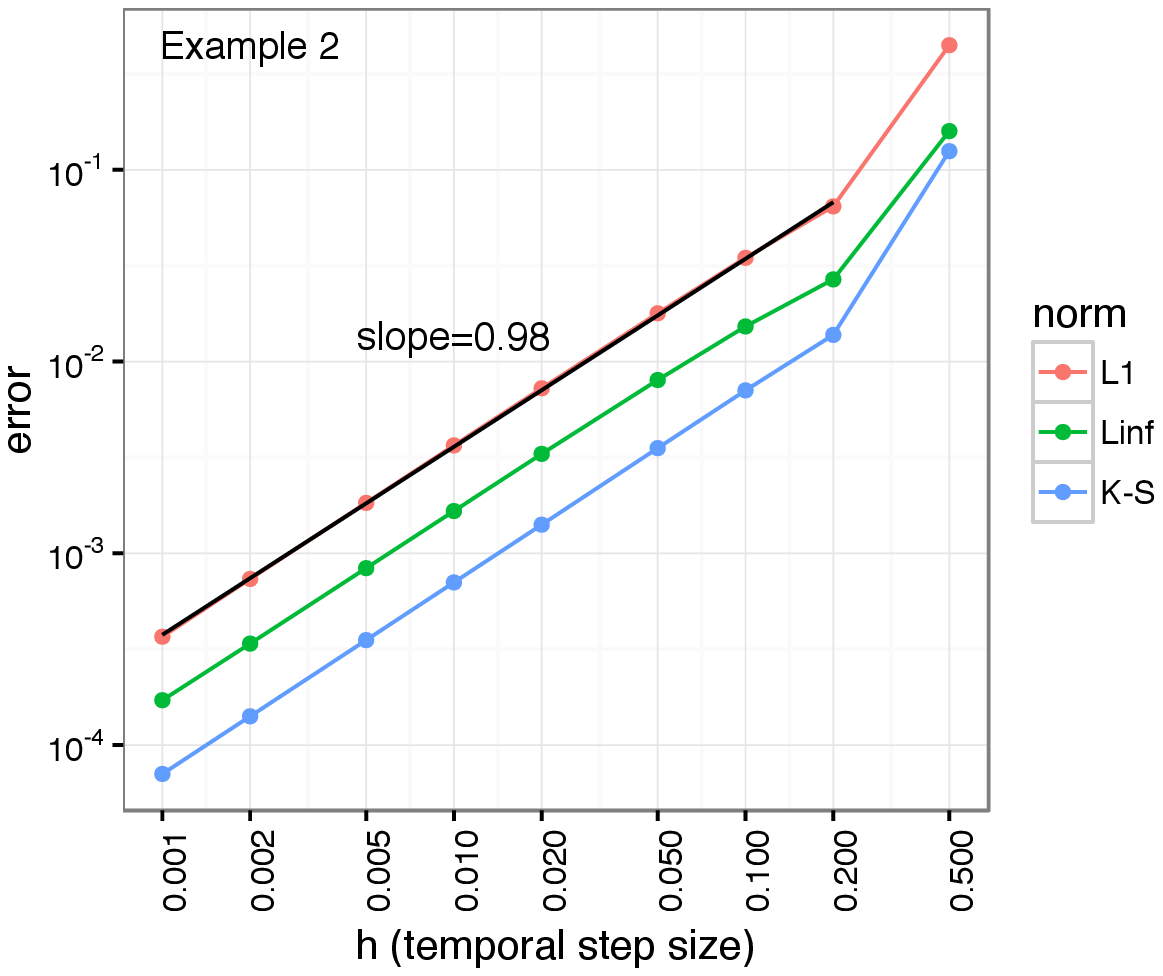}} \\
{\centering \includegraphics[width=2.45in,clip,trim=0 50 0
  0]{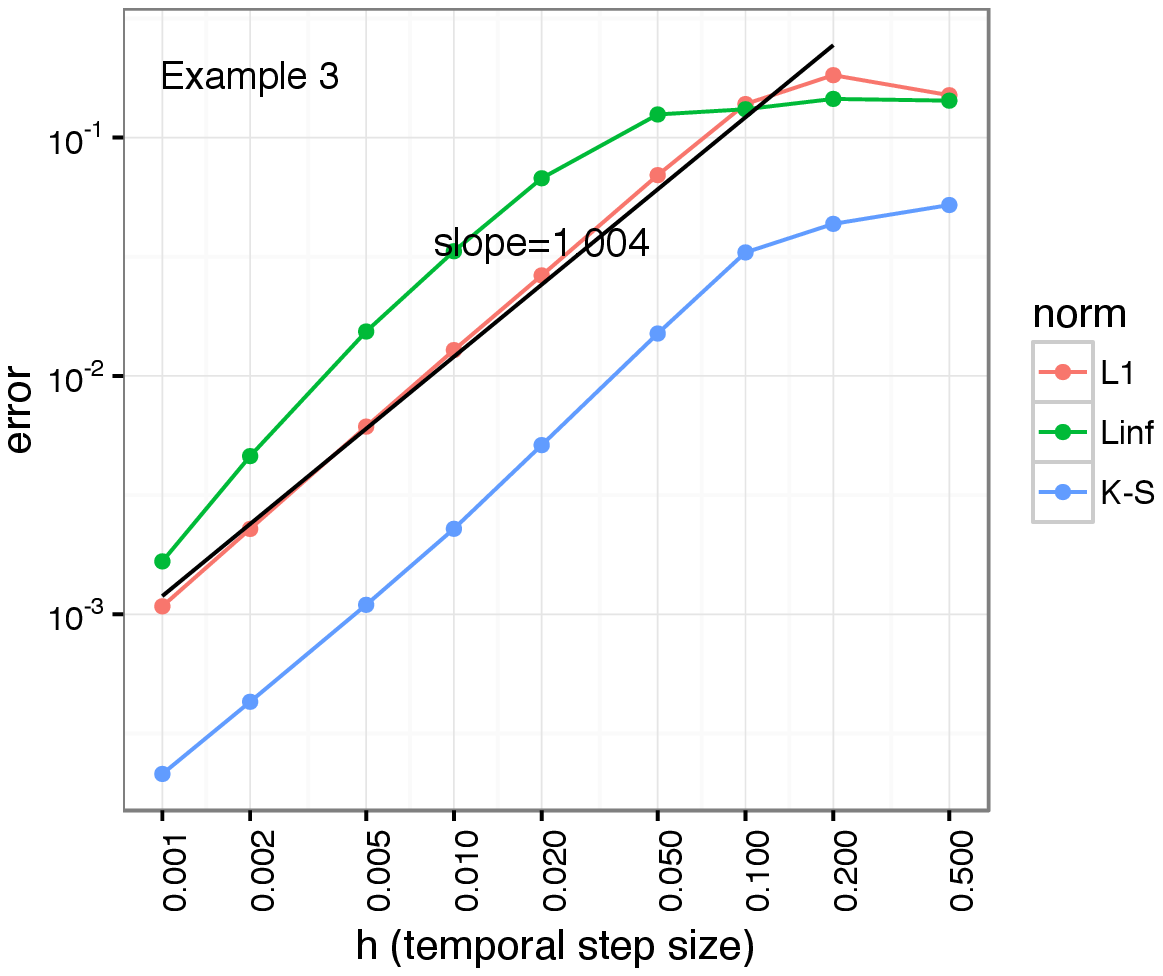} \includegraphics[width=2.45in,clip,trim=0 50 0 0]{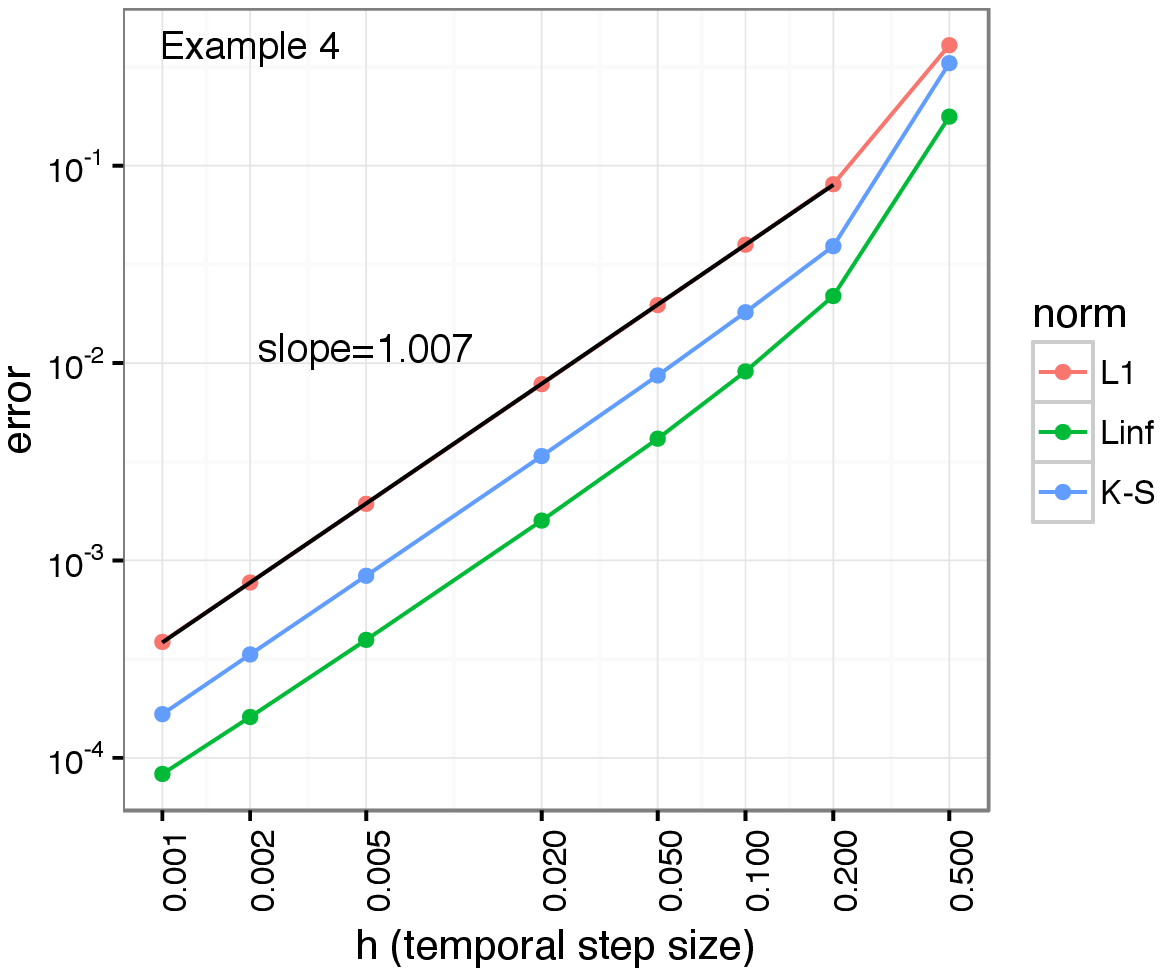}} \\
{\centering \includegraphics[width=2.45in]{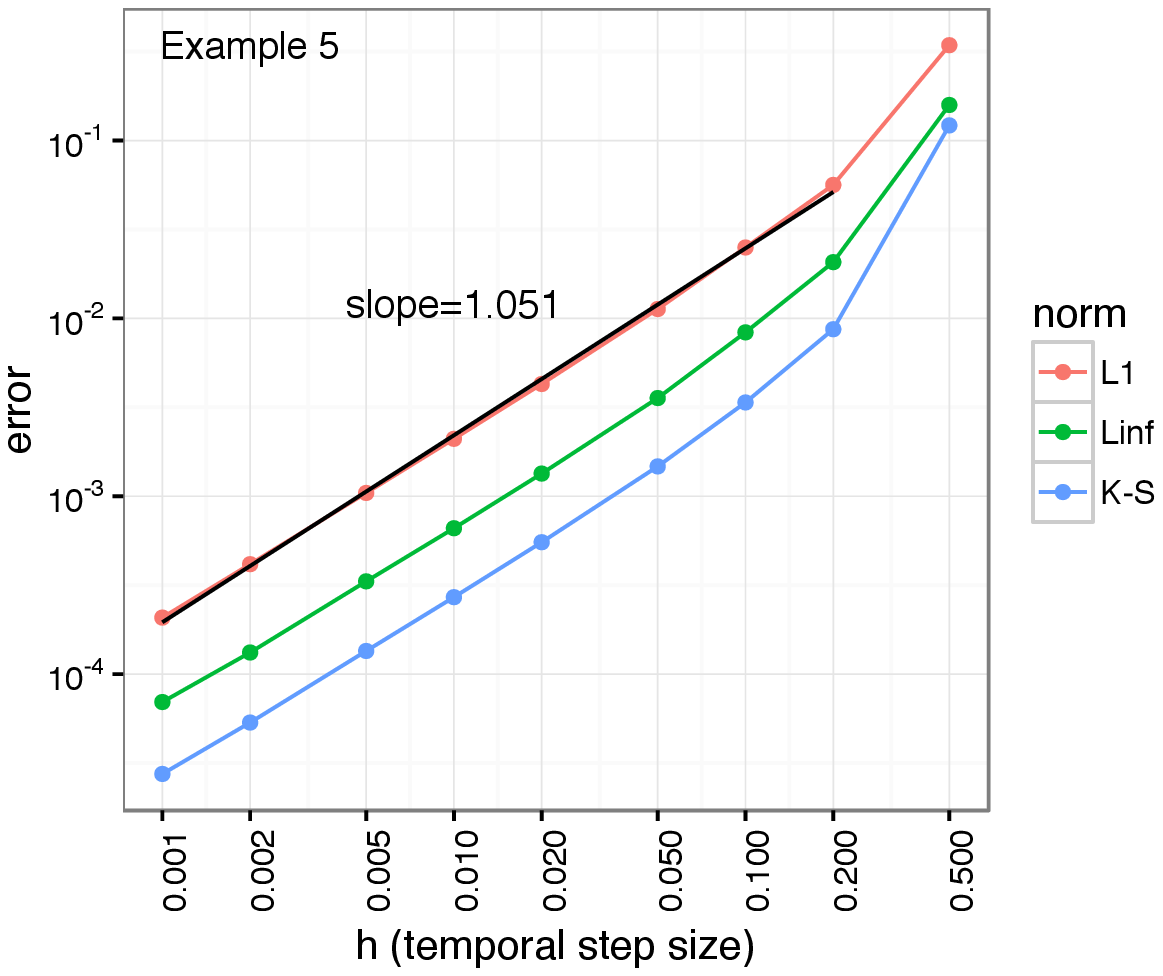} \includegraphics[width=2.45in]{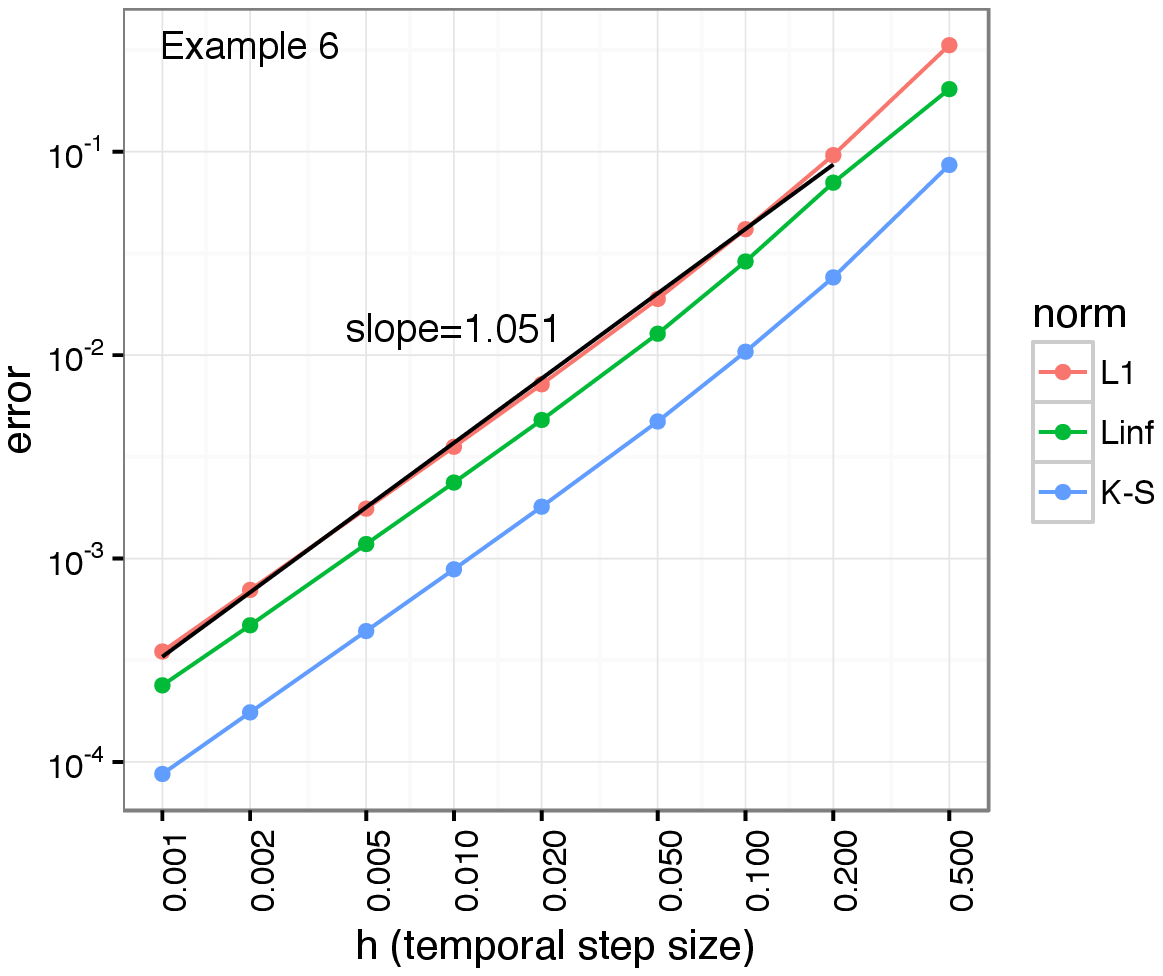}}
\caption{For each of the six examples in
  (\ref{eqn:allexamples}), we test the DTQ method's convergence.  For each
  example, we plot errors between DTQ and exact solutions on log-scaled axes as a function of $h$, the temporal step
  size; all other parameters are given by (\ref{eqn:algoparams}).  We compute errors in each of the three norms given by
  (\ref{eqn:allnorms}).  The horizontal axes (labels and tick mark locations) are the same for
  all plots and correspond to the $h$ values in (\ref{eqn:hvals}).
  Least-squares fits to the $L^1$ error data are indicated by black
  lines and corresponding slope values.  For all examples, we observe first-order convergence, consistent with our $O(h)$ theoretical result.}
\label{fig:convergence}
\end{figure*}

Figure \ref{fig:convergence} shows the convergence results for all six examples.  The overall impression we gain from the plots is that the practical $L^1$ error between the DTQ and exact density functions scales like $h$.  As we now explain, this first-order convergence is displayed under a variety of conditions.

Example (\ref{eqn:ex1}) features drift and diffusion coefficients that clearly satisfy the hypotheses of our convergence theory.  In this case, the computational results confirm the theory.

In Example (\ref{eqn:ex2}), the drift and diffusion coefficients satisfy all but one
of the hypotheses.  Specifically, because $\sech x \to 0$ as $|x| \to
\infty$, the diffusion coefficient is not bounded away from zero.
However, as a matter of numerical practice, on any truncated domain of
the form (\ref{eqn:algoparams}), the diffusion coefficient never
equals zero.  We can say, then, that on the computational domain, the
diffusion coefficient does have a global lower bound that is greater
than zero.  The computational results display first-order convergence.

Example (\ref{eqn:ex3}) is similar to Example (\ref{eqn:ex2}) in that all but one of the
hypotheses are satisfied.  Again, it is the diffusion coefficient
$\cos^2 y$ that is not bounded away from zero.  However, either an
analysis of the original SDE or inspection of the exact solution
reveals that the density will only be supported on the interval
$(-\pi/2, \pi/2)$.  For this SDE, we set $M = \lceil \pi/(2k) - 2
\rceil$ as in (\ref{eqn:Mdef}), retaining (\ref{eqn:kchoice}) and (\ref{eqn:xjchoice}).  This way, the spatial grid covers the interior of $(-\pi/2,\pi/2)$ and the diffusion coefficient never reaches zero.  Again, the computational results show that the $L^1$ error scales like $h$.  

Moving to Examples (\ref{eqn:ex4}) and (\ref{eqn:ex5}), the diffusion coefficient is now bounded from below by $1$ but unbounded above.  All other hypotheses of our convergence theory are satisfied.  The empirical convergence rates for both examples match what we expect from theory.

Reexamining the situation with slightly more depth, what we find from our proofs is that (\ref{eqn:usingm3}) is the only place where the upper bound on the diffusion coefficient is used.  However, for the particular case of the diffusion coefficient $g(x) = (1+x^2)^{1/2}$ used in Examples 4 and 5, we have that
\begin{equation}
\label{eqn:stillworks}
\left| \Im \left( g(y + \irm \epsilon) g'(y + \irm \epsilon) \right) \right| = \left| \Im (y + \irm \epsilon) \right| \leq d,
\end{equation}
meaning that we can substitute $d$ for $M_3 M_4$ and the convergence proof follows.  This is an example of how, for specific SDE that do not satisfy the hypotheses of the general theorem, we may yet be able to prove convergence of the DTQ method.

Finally, we come to Example (\ref{eqn:ex6}).  Now we have that the derivative of the drift coefficient is unbounded \emph{and} that the diffusion coefficient is unbounded above.  Though the hypotheses of the convergence theory are not satisfied, we still observe first-order convergence.

For the SDE in Example (\ref{eqn:ex6}), even if we are able to patch our proof to
prove that $\hat{p}$ converges to $\tilde{p}$, we can no longer apply
the result of \citet{BallyTalay1996} to guarantee
convergence of $\tilde{p}$ to $p$.  Overall, we take the numerical
results for Example (\ref{eqn:ex6}) as evidence that $\tilde{p}$ must converge to $p$
under more general conditions than have been established in the literature.

\subsection{Comparison with Fokker-Planck}
\label{sect:fpcomp}
Now we compare DTQ against a classical
approach, that of numerically solving the Fokker-Planck or Kolmogorov
PDE (\ref{eqn:kolmo}).  In what follows, we use subscripts to denote
partial derivatives, so that (\ref{eqn:kolmo}) is written
\begin{equation}
\label{eqn:kolmo2}
p_t + \left(f(x) p(x,t)\right)_x = \frac{1}{2} \left(g^2(x) p(x,t)\right)_{xx}.
\end{equation}

To solve this equation, we employ a standard finite
difference method.  To resolve the singular initial condition $p(x,0)
= \delta(x)$, we use a subtraction technique: we set $p = u + v$,
where $u$ solves
\begin{equation}
\label{eqn:upde}
u_t = \frac{1}{2} \kappa u_{xx}, \qquad u(x,0) = \delta(x),
\end{equation}
while $v$ solves
\begin{subequations}
\label{eqn:vpde}
\begin{align}
\begin{split}
\label{eqn:vpdea}
& v_t + \left(f(x) v(x,t)\right)_x = \frac{1}{2} \left(g^2(x)
  v(x,t)\right)_{xx} \\
 & \qquad + \underbrace{\frac{1}{2} \left[ \left( g^2(x) - \kappa
  \right) u(x,t) \right]_{xx} - \left[ f(x) u(x,t) \right]_{x}}_{F(x,t)} \end{split} \\
&v(x,0) = 0.
\end{align}
\end{subequations}
The point is that (\ref{eqn:upde}) can be solved analytically, i.e.,
for $t > 0$,
\begin{equation}
\label{eqn:usol}
u(x,t) = \frac{1}{\sqrt{2 \pi \kappa t}} \exp \left( -\frac{x^2}{2
    \kappa t} \right).
\end{equation}
Here $\kappa > 0$ is a parameter that we are free to set.  In our own
tests, we use $\kappa = 1$.  Since (\ref{eqn:usol}) is known, we
substitute it into the final two terms on the right-hand side of
(\ref{eqn:vpdea})---this yields a known forcing term $F(x,t)$.  We
then employ the following numerical scheme to solve (\ref{eqn:vpde})
for $v(x,t)$:
\begin{itemize}
\item We discretize $v(x,t)$ on fixed spatial and temporal grids with
  respective spacings $k$ and $h$.  Let
  $V_j^n$ denote our numerical approximation to $v(jk,nh)$.  Here $0
  \leq n \leq N$ with $N h = T > 0$, the final time.  We also have
  that $-M \leq j \leq M$.  Implicitly, we assume that $v(x,t) = 0$
  for $|x| > Mk$.
\item We use a first-order approximation to $v_t$: $v_t(x,t) \approx
  (V_j^{n+1} - V_j^n)/h$.
\item We treat the drift term explicitly: 
$$
\left( f(x) v(x,t) \right)_x \approx \left( f((j+1) k) V_{j+1}^n -
  f((j-1) k) V_{j-1}^n \right)/(2 k).
$$
\item We treat the diffusion term implicitly:
\begin{multline*}
\frac{1}{2} \left( g^2(x) v(x,t) \right)_{xx} \approx \frac{1}{2 k^2}
\Bigl[ g^2((j-1) k) V_{j-1}^{n+1} \\ - 2 g^2(j k) V_{j}^{n+1} + g^2((j+1)
  k) V_{j+1}^{n+1} \Bigr].
\end{multline*}
\end{itemize}
Let $\mathbf{V}^n$ be a vector of length $2M+1$ whose $j$-th entry
is $V_j^n$.  Then, combining approximations, we obtain the
matrix-vector system
\begin{equation}
\label{eqn:fpmatvec0}
A \mathbf{V}^{n+1} = B \mathbf{V}^n + \mathbf{F}^n
\end{equation}
with tridiagonal matrices $A$ and $B$ given by (\ref{eqn:Amat}) and 
(\ref{eqn:Bmat}) in Table \ref{tab:ABmat}.
\begin{table*}[t]
\begin{gather}
\label{eqn:Amat}
A = \begin{bmatrix} 1 + \frac{h}{k^2} g^2_{-M} & -\frac{h}{2k^2}
  g^2_{-M+1} & & & & \\
 -\frac{h}{2k^2} g^2_{-M} & 1 + \frac{h}{k^2} g^2_{-M+1} & -\frac{h}{2k^2}
 g_{-M+2} & & & \\
 & -\frac{h}{2k^2} g^2_{-M+1} & 1 + \frac{h}{k^2} g^2_{-M+2} &
 -\frac{h}{2k^2} g^2_{-M+3} & & \\
& & \diagdots{7em}{0em}
& \diagdots{7em}{0em}
& \diagdots{7em}{0em} & \\
& & & -\frac{h}{2k^2} g^2_{M-2} & 1 + \frac{h}{k^2} g^2_{M-1} &
-\frac{h}{2k^2} g^2_M \\
& & & & -\frac{h}{2k^2} g^2_{M-1} & 1 + \frac{h}{k^2} g^2_M \end{bmatrix} \\
\label{eqn:Bmat}
B = \begin{bmatrix} 1 & -\frac{h}{2k} f_{-M+1} & & & & \\
 \frac{h}{2k} f_{-M} & 1  & -\frac{h}{2k} f_{-M+2} & & & \\
 & \frac{h}{2k} f_{-M+1} & 1  & -\frac{h}{2k} f_{-M+3} & & \\
& & \diagdotss{7em}{0em} & \diagdotss{9em}{0em} & \diagdotss{5em}{0em} & \\
& & & \frac{h}{2k} f_{M-2} & 1  & -\frac{h}{2k} f_M \\
& & & & \frac{h}{2k} f_{M-1} & 1 \end{bmatrix}.
\end{gather}
\caption{Tridiagonal matrices used in the Fokker-Planck solver---see 
(\ref{eqn:fpmatvec0}) for further details.}
\label{tab:ABmat}
\end{table*}
We also define $\mathbf{F}^n$ in (\ref{eqn:fpmatvec0}) by discretizing
$F(x,t)$ in (\ref{eqn:vpdea}).  That is, for $-M \leq j \leq M$,
we define the $j$-th component of $\mathbf{F}^n$ by
\begin{multline}
\label{eqn:fndef}
F_j^n = \frac{h}{2k^2} \Bigl[ g^2((j-1)k) u((j-1)k,nh) \\
 - 2 g^2(j k) u(jk,nh) + g^2((j+1)k) u((j+1)k,nh) \Bigr] \\ 
 - \frac{h}{2k} \Bigl[ f((j+1)k) u((j+1)k,nh) - f((j-1)k) u((j-1)k,nh) \Bigr].
\end{multline}
To solve for $\mathbf{V}^{n+1}$ given $\mathbf{V}^{n}$, we rewrite
(\ref{eqn:fpmatvec0}) as
\begin{equation}
\label{eqn:fpmatvec}
\mathbf{V}^{n+1} = A^{-1} B \mathbf{V}^n + A^{-1} \mathbf{F}^n.
\end{equation}
We compute
$u(jk,T)$ for $-M \leq j \leq M$ and denote the resulting vector by
$\mathbf{U}^N$.
Let $p_\text{FP}(\mathbf{x},T)$ denote the
vector whose $j$-th component is $p_\text{FP}(x_j,T)$, the
approximation of $p(x_j,T)$ obtained by solving the Fokker-Planck
equation numerically.  With these definitions, our algorithm for
computing $p_\text{FP}$ is easily stated: we start with $\mathbf{V}^0
= \mathbf{0}$, iterate (\ref{eqn:fpmatvec}) $N$ times to compute
$\mathbf{V}^N$, and then compute
$$
p_\text{FP}(\mathbf{x},T) = \mathbf{U}^N + \mathbf{V}^N.
$$
Note that in our Fokker-Planck solver, the
matrices $A$ and $B$ defined by (\ref{eqn:Amat}) and (\ref{eqn:Bmat})
are implemented as sparse tridiagonal matrices.  When we use
(\ref{eqn:fpmatvec}) to solve for $\mathbf{V}^{n+1}$, we use sparse
numerical linear algebra to compute both $A^{-1} B$ and $A^{-1}
\mathbf{F}^n$.  In particular, $A^{-1} B$ is precomputed before we
loop from $n=0$ to $n=N-1$.
\begin{figure*}[tbh]
{\centering \includegraphics[width=4.95in]{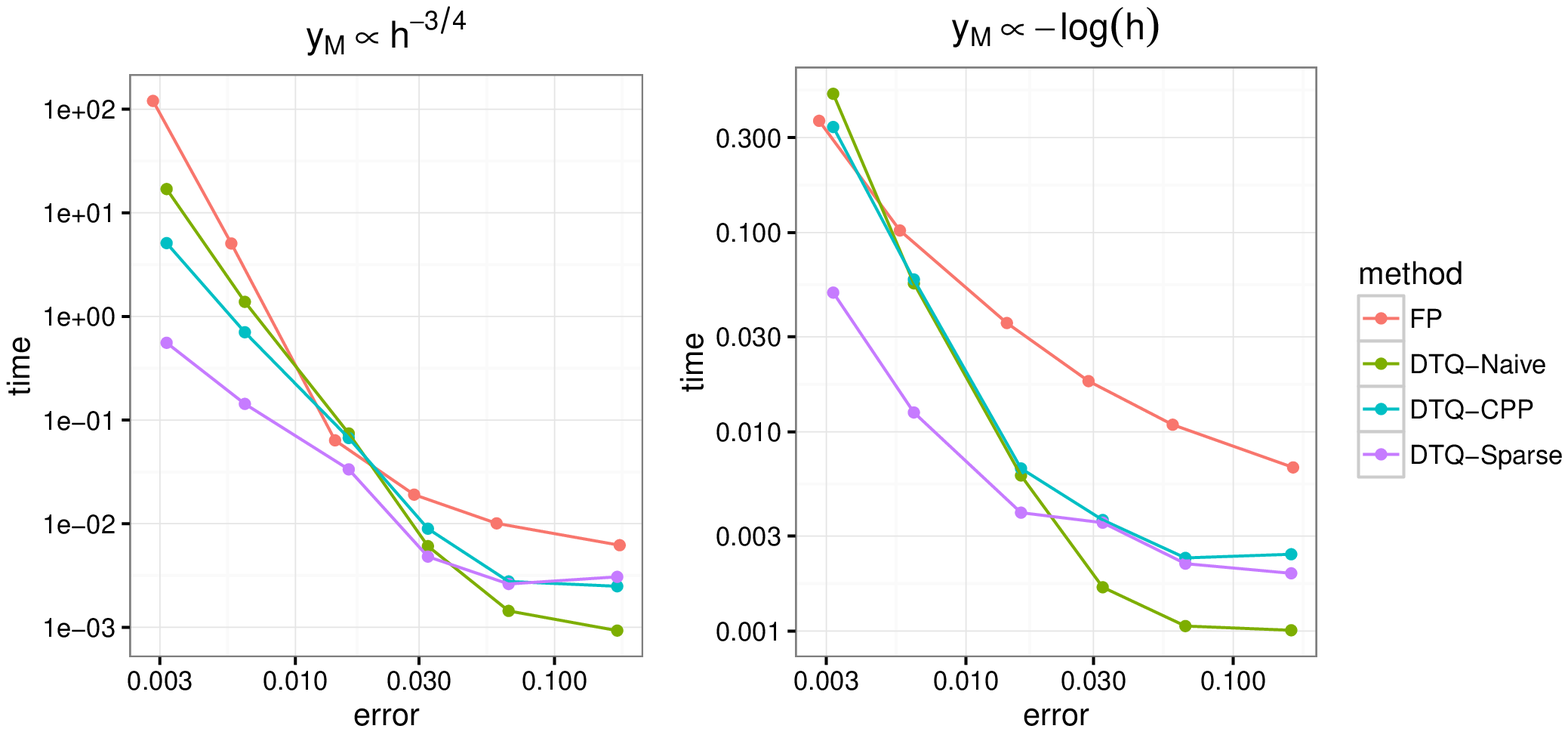}}
\caption{For a particular SDE, Example (\ref{eqn:ex1}), suppose we are interested in computing
  the density $p(x,T)$ at time $T=1$.  When we compute this density,
  we will incur some error, measured here in the $L^1$ norm.  \textbf{The
  plotted results show that for a fixed value of this error, the
  DTQ methods require less computational time (measured in wall clock
  seconds) than a method for numerically solving the Fokker-Planck
  PDE.} In all simulations, we use a domain $[-y_M, y_M]$.  For the
simulations in the left (respectively, right) plot, we have scaled the
domain according to $y_M \propto h^{-3/4}$ (respectively, $y_M \propto
\log h^{-1}$), where $h > 0$ is the time step.  In both plots, we see that for smaller values of the
error, the fastest method is DTQ-Sparse; for larger values of
the error, the fastest method is DTQ-Na\"ive.  In particular, for the
smallest error of $0.003$, DTQ-Sparse is over $10^2$
(respectively, $10^{3/4}$) times
faster than the Fokker-Planck method in the left (respectively, right)
plot.  Despite the fact that our
Fokker-Planck solver uses the same sparse numerical linear algebra as
DTQ-Sparse, it is often the slowest of the four methods. For details regarding
the three implementations of DTQ (DTQ-Na\"ive, DTQ-CPP,
and DTQ-Sparse) as well as the implementation of our Fokker-Planck
solver, please see Section \ref{sect:fpcomp}. }
\label{fig:fpcompare}
\end{figure*}

We are now in a position to compare the DTQ and Fokker-Planck
methods.  For this comparison, we exclusively use the drift and
diffusion functions from Example (\ref{eqn:ex1}).  As
described above, among the examples in (\ref{eqn:allexamples}),
Example (\ref{eqn:ex1}) is the only one that satisfies all of the hypotheses of our
DTQ convergence theory.

As mentioned in Section \ref{sect:trunc},
  when we implement DTQ in practice, we start with
  (\ref{eqn:pcircn1})---with $x$ discretized on the same spatial grid
  as $y$, i.e.,
\begin{equation}
\label{eqn:actualdtqmethod}
\mathring{p}(x_i,t_{n+1}) = k \sum_{j=-M}^M G(x_i,y_j) \mathring{p}(y_j,t_n)
\end{equation}
For fixed $n$, as $j$ varies from $-M$ to $M$, the elements
$\mathring{p}(y_j,t_n)$ form a $(2M+1)$-dimensional vector that we denote
$\mathbf{p}^n$.  With this notation, (\ref{eqn:actualdtqmethod}) can
be written
\begin{equation}
\label{eqn:actualdtqmethod2}
\mathbf{p}^{n+1} = \mathcal{A} \mathbf{p}^n,
\end{equation}
where $\mathcal{A}$ is the $(2M+1) \times (2M+1)$ matrix whose $(i,j)$-th
element is $k G(x_i,y_j)$.  In our experience, \emph{the most
 computationally expensive part of DTQ is the assembly of
 $\mathcal{A}$.}  For the tests presented in this subsection, we have
implemented three different methods to compute $\mathcal{A}$:
\begin{enumerate}
\item \textbf{DTQ-Na\"ive}.  Here we assemble $\mathcal{A}$ using dense matrix
  methods in \texttt{R}.  The main advantage of this approach is ease of
  implementation; the code to compute $\mathcal{A}$ is only $4$ lines long.
  Incidentally, the convergence tests in the first part of this
  section use DTQ-Na\"ive for $h \geq 0.01$.
\item \textbf{DTQ-CPP}.  Implicitly, DTQ-Na\"ive forces \texttt{R} 
  to loop over the entries of $\mathcal{A}$ serially. In DTQ-CPP, we
  use Rcpp together with OpenMP directives in \texttt{C++} to compute and fill in the entries of
  $\mathcal{A}$ in parallel.   In practice, we run this code on a machine with
  $12$ cores, setting the number of OpenMP threads to $12$.
\item \textbf{DTQ-Sparse}.  Here we take advantage of the structure of
  $\mathcal{A}$.  Specifically, we have
$$
\mathcal{A}_{ij} = k G(x_i,y_j) = \frac{k}{\sqrt{2 \pi g^2(y_j) h}} \exp \left(
  -\frac{(x_i - y_j - f(y_j) h)^2}{2 g^2(y_j) h} \right).
$$
Let us set $i = j + i'$.  Then we have
\begin{equation}
\label{eqn:Adiag}
\mathcal{A}_{j+i', j} = \frac{k}{\sqrt{2 \pi g^2(y_j) h}} \exp \left(
  -\frac{(i'k - f(y_j) h)^2}{2 g^2(y_j) h} \right).
\end{equation}
We think of $i'$ as indexing the sub-/super-diagonals of $\mathcal{A}$.
For each fixed $i' = 0, 1, 2, \ldots$ we evaluate
(\ref{eqn:Adiag}) over all $j$ to obtain the $i'$-th
subdiagonal of $\mathcal{A}$.  For $h$ small, as $i'$ increases, we observe
that the entire subdiagonal decays rapidly.  In our implementation, we
compute subdiagonals until the $1$-norm of the subdiagonal drops below
$2.2 \times 10^{-16}$ (machine precision in \texttt{R}) multiplied by the
$1$-norm of the main $i'=0$ diagonal of $\mathcal{A}$.  We then compute the same
number of superdiagonals as subdiagonals.  The final $\mathcal{A}$ matrix is
assembled as a sparse matrix using the CRAN Matrix package
\citep{MatrixCRAN}.
\end{enumerate}
Given the tridiagonal structure of both $A$ and $B$ in the Fokker-Planck
method, we do not believe any reasonable modern implementation would use
dense matrices.  Similarly, while DTQ-Na\"ive requires minimal
programming effort, a reasonable implementation would look much more
like DTQ-CPP or DTQ-Sparse.  None of the DTQ methods require more
programming effort to implement than the Fokker-Planck method.

\paragraph{Results for $O(h^{3/4})$ Domain Scaling.} For each $h$ in
(\ref{eqn:hvals}) that satisfies $h \geq 0.01$, we use all three DTQ methods and the Fokker-Planck method to generate
numerical approximations of the density function at the final time
$T=1$.  For our first set of comparisons, parameters such as $k$ and
$M$ are set via (\ref{eqn:algoparams}).  In particular, the
computational domain is $[-y_M, y_M]$ where $y_M = M k \propto h^{-3/4}$.
We compute the $L^1$ errors between each numerical solution and the exact solution $p(x,T)$.  We also record
the wall clock time (in seconds) required to compute the solution
using each method.  Each measurement is repeated $100$ times; we
report average results.

In the left panel of Figure \ref{fig:fpcompare}, we have plotted (on
log-scaled axes) wall clock time as a function of $L^1$ error for
each of the four methods.  We see that if one can tolerate a
relatively large $L^1$ error, then the fastest method is the
DTQ-Na\"ive method (green); for $L^1$ errors less than $0.03$, the
fastest method is DTQ-Sparse (purple).  The Fokker-Planck
method is often the slowest of the four methods.  For an error of
$0.003$, DTQ-Sparse is approximately $100$ times faster
than the Fokker-Planck method.

\paragraph{Results for $O(\log h^{-1})$ Domain Scaling.}  For our second set of comparisons, we have changed the way that $y_M$
(effectively, the size of the computational domain) scales with $h$.
We retain $k = h^{3/4}$ but now set $y_M = (2 + 3/4)(-\log h) \propto
(-\log h)$ in accordance with (\ref{eqn:ourM}).  The
spatial grid, for all four methods, is now given by $x_j = -y_M +
(j+M)k$ for $-M \leq j \leq M$ with $M = \lfloor y_M / k \rfloor$.  In
all other respects, we make no changes and rerun the test described
above for all four methods.

In the right panel of Figure \ref{fig:fpcompare}, we
have plotted (on log-scaled axes) wall clock time as a function of
$L^1$ error for each of the four methods.  Once again, we find that
DTQ-Na\"ive and DTQ-Sparse are the fastest for,
respectively, large and small error values.  For an error of $0.003$,
DTQ-Sparse is approximately $10^{3/4} \approx 5.62$ times
faster than the Fokker-Planck method.

\section{Conclusion and Future Directions}
\label{sect:conclusion}
We have established fundamental properties of the DTQ method, including theoretical and empirical convergence results.  Let us make three concluding remarks regarding our results.

First, we have not yet mentioned that DTQ features two properties that are not always easy to establish for numerical methods for the Fokker-Planck equation (\ref{eqn:kolmo}): (i) DTQ automatically preserves the
nonnegativity of the computed density $\hat{p}$, and (ii) the DTQ
density $\hat{p}$ has a normalization constant that can be estimated
for finite $h, k > 0$.  In practice, we find that $\mathring{p}$ is very
close to being correctly normalized.

Second,  $p(x,T)$ and $\tilde{p}(x,t_N)$ correspond to,
  respectively, the random variables $X_T$ and $x_N$.  Convergence in
  $L^1$ of $\tilde{p}$ to $p$ is equivalent to convergence in total
  variation of $x_N$ to $X_T$.  Note that
\begin{equation}
\label{eqn:Gintconseq}
\int_{x \in \mathbb{R}} \hat{p}(x,t_{n+1}) \, dx = k
\sum_{j=-\infty}^\infty \hat{p}(y_j,t_n) = k K_n,
\end{equation}
implying that $\hat{q}(x,t_{n+1}) = \hat{p}(x,t_{n+1})/ (k K_n)$ is the
density of a continuous random variable $y_n$.  An easy consequence
of our results is that $\hat{q}$ converges to $\tilde{p}$ in $L^1$,
implying convergence of $y_N$ to $x_N$ in total variation.

Third, if we trace back the crux of our convergence proof, a key step
is estimating the $L^1$ error of $\tau$ starting from the trapezoidal
rule error estimate (\ref{eqn:tauupper}).  To do this, it was
essential that we have an estimate of $\mathcal{N}$ that is an $L^1$
function of $x$.  It was to obtain such an estimate that we put our
efforts into Lemma \ref{lem:Gbound}.  We have tried to replicate this
analysis using more conventional error estimates for the trapezoidal
rule---estimates that require less regularity of the
integrand than we have assumed.  Thus far, these other attempts have
failed because they do not yield an upper bound on $\tau$ that is
itself an $L^1$ function of $x$.  The approach in the present work is
the only one that we have gotten to work.

The present research motivates four main questions that we seek to answer
in future work:
\begin{enumerate}
\item When we derived the DTQ method, we used three approximations:
  (i) an Euler-Maruyama approximation of the original SDE, (ii) a
  trapezoidal quadrature rule, and (iii) a finite dimensionalization
  of $\tilde{p}$ that consists of sampling the function on a
  truncated grid.  The first question to ask is: what happens to the
  DTQ method if we improve upon these initial approximations?

  Regarding (ii), we can say that we have written a test code in which
  we use Gauss-Hermite quadrature instead of the trapezoidal rule.
  This does not yield better convergence.  Given the exponential
  convergence of $\hat{p}$ to $\tilde{p}$ established here,
  this should not be a surprise.

  Regarding (iii), rather than sampling the function
  $\tilde{p}(x,t_n)$ on a discrete grid, we could have
  instead chosen to represent $\tilde{p}(x,t_n)$ as a linear
  combination of functions---for instance, a linear combination of
  Gaussian densities, where each density is centered at a grid point
  $x_j$.  In a collocation scheme, we would then insert these
  approximations of $\tilde{p}$ into (\ref{cdt}) and enforce equality
  at a finite number of points.  We have tried this as well in a test
  code.  While such a scheme does not yield
  better numerical behavior, it may be easier to
  analyze.

  If we had to choose one approximation (among (i), (ii), or (iii)) to target,
  we would choose (i).
  Suppose we replace the Euler-Maruyama 
  method with a higher-order method.  The higher-order method then 
  induces a new conditional density function $\tilde{p}_{n+1|n}$ that replaces the 
  Gaussian kernel $G$.
  Using this new $\tilde{p}_{n+1|n}$ in place of $G$, the evolution equation 
  (\ref{eqn:pcircn1}) for $\mathring{p}$ remains the same.   Preliminary results
  with the weak trapezoidal method \citep{AndersonMattingly} indicate that, 
  in this way, we can obtain a version of the DTQ method that features 
  $O(h^2$) convergence of $\mathring{p}$ to $p$.  
Note that if we instead retain approximation (i) and replace (ii) and/or (iii), we will be stuck with the $O(h)$ convergence rate of $\tilde{p}$ to $p$, 
thereby blocking improvements to the overall convergence rate of $\mathring{p}$ to $p$.

\item Can we patch DTQ to handle diffusion functions $g$
  that equal zero at, say, a finite number of discrete points in the
  computational domain?  We believe there should be some way of doing
  this by subtracting out singularities of $G$ inside the
  Chapman-Kolmogorov equation (\ref{eqn:phatn1}).

\item Can we derive DTQ-like methods for stochastic differential
  equations driven by stochastic processes other than the Wiener
  process?  In ongoing work,
  we are studying how to derive such methods to solve for the
  density in the case when we replace $dW_t$ by a process whose
  increments follow a L\'{e}vy $\alpha$-stable distribution.  For such
  an SDE, current methods for computing the density involve numerical
  solution of a fractional Fokker-Planck equation.  We expect DTQ-like
  methods to be highly competitive for such problems.
\end{enumerate}


\section*{Acknowledgements}
H.S.B. acknowledges computational time on the MERCED cluster (NSF ACI-1429783).  H.S.B. and R.W.M.A.M. acknowledge support for this work from UC Merced, through UC Merced Committee on Research grants, Applied Mathematics Graduate Group fellowships, and a School of Natural Sciences Dean's Distinguished Scholars Fellowship. 

\bibliographystyle{acmtrans-ims}
\bibliography{dtqpaper}

\end{document}